\title{On the Parameterized Complexity of Computing \\$st$-Orientations with Few Transitive Edges\thanks{A preliminary version of this paper will appear in the proceedings of MFCS 2023. Research partially supported by (i) University of Perugia, Fondi di Ricerca di Ateneo, edizione 2021, project ``AIDMIX - Artificial Intelligence for Decision making: Methods for Interpretability and eXplainability'', (ii) Progetto RICBA22CB {\em ``Modelli, algoritmi e sistemi per la visualizzazione e l'analisi di grafi e reti''.}}}
\titlerunning{On the Parameterized Complexity of Computing $st$-Orientations}
\newcommand{\tw}{\omega}
\newcommand{\sto}{\textsc{st-Orientation}\xspace}
\newcommand{\ntstolong}{\textsc{Non-Transitive st-Orientation}}
\newcommand{\ntsto}{\textsc{NT-st-Orientation}\xspace}
\newcommand{\naelong}{\textsc{Not-all-equal 3-Sat}\xspace}
\newcommand{\nae}{\textsc{NAE-3-Sat}\xspace}
\newcommand{\true}{\text{true}}
\newcommand{\false}{\text{false}}
 \author{Carla Binucci}{Department of Engineering, University of Perugia, Italy}{carla.binucci@unipg.it}{https://orcid.org/0000-0002-5320-9110}{}
 \author{Giuseppe Liotta}{Department of Engineering, University of Perugia, Italy}{giuseppe.liotta@unipg.it}{https://orcid.org/0000-0002-2886-9694}{}
 \author{Fabrizio Montecchiani}{Department of Engineering, University of Perugia, Italy}{fabrizio.montecchiani@unipg.it}{https://orcid.org/0000-0002-0543-8912}{}
 \author{Giacomo Ortali}{Department of Engineering, University of Perugia, Italy}{giacomo.ortali@unipg.it}{https://orcid.org/0000-0002-4481-698X}{}
 \author{Tommaso Piselli}{Department of Engineering, University of Perugia, Italy}{tommaso.piselli@studenti.unipg.it}{https://orcid.org/0000-0002-7088-920X}{}
 \authorrunning{C. Binucci, G. Liotta, F. Montecchiani, G. Ortali, T. Piselli} 
\keywords{$st$-orientations, parameterized complexity, graph drawing} 
\begin{document}

\maketitle

\begin{abstract}
    Orienting the edges of an undirected graph such that the resulting digraph satisfies some given constraints is a classical problem in graph theory, with multiple algorithmic applications. In particular, an $st$-orientation orients each edge of the input graph such that the resulting digraph is acyclic, and it contains a single source $s$ and a single sink $t$. Computing an $st$-orientation of a graph can be done efficiently, and it finds notable applications in graph algorithms and in particular in graph drawing. On the other hand, finding an $st$-orientation with at most $k$ transitive edges is more challenging and it was recently proven to be \NP-hard already when $k=0$. We strengthen this result by showing that the problem remains \NP-hard even for graphs of bounded diameter, and for graphs of bounded vertex degree. These computational lower bounds naturally raise the question about which structural parameters can lead to tractable parameterizations of the problem. Our main result is a fixed-parameter tractable algorithm parameterized by treewidth.
\end{abstract}

\section{Introduction}
An orientation of an undirected graph is an assignment of a direction to each edge, turning the initial graph into a directed graph (or \emph{digraph} for short). Notable examples of orientations are acyclic orientations, which guarantee the resulting digraph to be acyclic; transitive orientations, which make the resulting digraph its own transitive closure; and Eulerian orientations, in which each vertex has equal in-degree and out-degree. 
Of particular interest for our research are certain constrained acyclic orientations, which find applications in several domains, including graph planarity and graph drawing. More specifically, given a graph $G=(V,E)$ and two vertices $s,t \in V$, an \emph{$st$-orientation} of $G$, also known as \emph{bipolar orientation}, is an orientation of its edges such that the corresponding digraph is acyclic and contains a single source $s$ and a single sink $t$. It is well-known that $G$ admits an $st$-orientation if and only if it is biconnected after the addition of the edge $st$ (if not already present). The computation of an $st$-numbering (an equivalent concept defined on the vertices of the graph) is for instance part of the quadratic-time planarity testing algorithm by Lempel, Even and Cederbaum~\cite{lempel1967algorithm}. Later, Even and Tarjan~\cite{EVEN1976339} showed how to compute an $st$-numbering in linear time, and used this result to derive a linear-time planarity testing algorithm. In the field of graph drawing, bipolar orientations are a central algorithmic tool to compute different types of layouts, including visibility representations, polyline drawings, dominance drawings, and orthogonal drawings (see~\cite{DBLP:books/ph/BattistaETT99,DBLP:conf/dagstuhl/1999dg} for references). On a similar note, a notable result states that a planar digraph admits an upward planar drawing if and only if it is the subgraph of a planar $st$-graph, that is, a planar digraph with a bipolar orientation~\cite{DBLP:journals/tcs/BattistaT88}.

\begin{figure}[t]
    \centering
    \begin{minipage}[t]{0.25\textwidth}
    \includegraphics[width=\columnwidth,page=3]{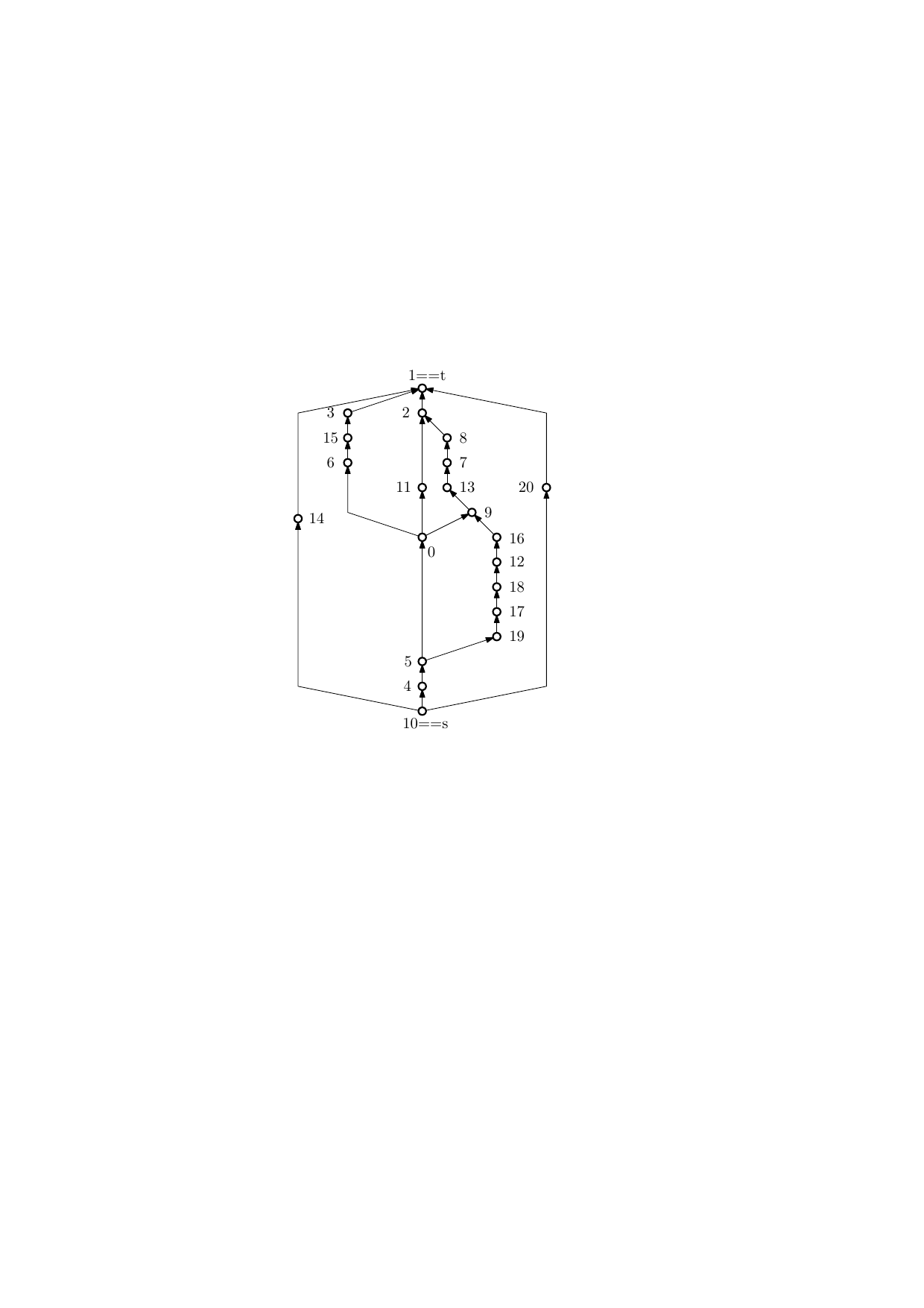}
    \subcaption{}
    \end{minipage}\hfill
    \begin{minipage}[t]{0.25\textwidth}\includegraphics[width=\columnwidth,page=1]{st-orientations.pdf}
    \subcaption{}
     \end{minipage}\hfill
     \begin{minipage}[t]{0.25\textwidth}\includegraphics[width=\columnwidth,page=2]{st-orientations.pdf}\subcaption{\label{fig:introc}}
     \end{minipage}
    \caption{(a): An undirected graph $G$ with randomly labeled vertices. (b)-(c): Two polyline drawings of $G$ computed by using different $st$-orientations. The drawing in (b) uses an $st$-orientation without transitive edges and it has smaller area and number of bends than the drawing in (c). 
}
    \label{fig:intro}
\end{figure}

Recently, Binucci, Didimo and Patrignani~\cite{DBLP:conf/gd/BinucciDP22} focused on $st$-orientations with no transitive edges. We recall that an edge $uv$ is \emph{transitive} if the digraph contains a path directed from $u$ to $v$; for example, the bold (red) edges in \cref{fig:introc} are transitive, see also \cref{se:preliminaries} for formal definitions. Besides being of theoretical interest, such orientations, when they exist, can be used to compute readable and compact drawings of  graphs~\cite{DBLP:conf/gd/BinucciDP22}. For example, a classical graph drawing algorithm relies on $st$-orientations to compute polyline representations of  planar graphs. The algorithm is such that both the height and the number of bends of the representations can be reduced by computing $st$-orientations with few transitive edges. See Algorithm {\tt Polyline} in~\cite{DBLP:books/ph/BattistaETT99} for details and \cref{fig:intro} for an example. 
 
Unfortunately,  while an $st$-orientation of an $n$-vertex graph can be computed in $O(n)$ time, computing one that has the minimum number of transitive edges is much more challenging from a computational perspective.  Namely, Binucci et al.~\cite{DBLP:conf/gd/BinucciDP22} prove that the problem of deciding whether an $st$-orientation with no transitive edges exists is \NP-complete, and provide an ILP model for planar graphs. 

\begin{figure}[t]
    \centering
    \includegraphics[scale=0.9]{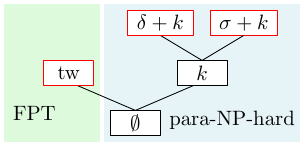}
    \caption{The complexity landscape of the \sto problem. The symbols tw, $\delta$, and $\sigma$ denote the treewidth, the maximum vertex degree, and the diameter of the graph, respectively. The boxes with red boundaries denote the new results presented in this paper. }
    \label{fig:st}
\end{figure}

\smallskip\noindent\textbf{Contribution.} We study the parameterized complexity of finding $st$-orientations with few transitive edges. More formally, given a graph $G$ and an integer $k$, the \sto problem asks for an $st$-orientation of $G$ with at most $k$ transitive edges (see also \cref{se:preliminaries}). As already discussed, \sto is para-\NP-hard by the natural parameter $k$~\cite{DBLP:conf/gd/BinucciDP22}. We strengthen this result by showing that, for $k=0$, \sto remains \NP-hard even for graphs of diameter at most six, and for graphs of vertex degree at most four. In light of these computational lower bounds, we seek for structural parameters that can lead to tractable parameterizations of the problem. Our main result is a fixed-parameter tractable algorithm for \sto parameterized by treewidth, a central parameter in the parameterized complexity analysis (see~\cite{DBLP:series/mcs/DowneyF99,DBLP:journals/jal/RobertsonS86}). \cref{fig:st} depicts a summary of the computational complexity results known for the \sto problem.

It is worth remarking that by Courcelle's theorem one can derive an (implicit) \FPT~algorithm parameterized by treewidth and $k$, while we provide an explicit algorithm parameterized by treewidth only. The main challenge in applying dynamic programming over a tree decomposition is that the algorithm must know if adding an edge to the graph may cause previously forgotten non-transitive edges to become transitive, and, if so, how many of them. To tackle this difficulty, we describe an approach that avoids storing information about all edges that may potentially become transitive; instead, the algorithm guesses the edges that will be transitive in a candidate solution and ensures that no other edge will become transitive in the course of the algorithm. Our technique can be easily adapted to handle more general constraints on the sought orientation, for instance the presence of multiple sources and sinks.

\smallskip\noindent\textbf{Paper structure.} We begin with preliminary definitions and basic tools, which can be found in \cref{se:preliminaries}. In \cref{se:algo} we describe our main result, an \FPT~algorithm for the \sto problem parameterized by treewidth. \cref{se:hardness} contains our second contribution, namely we adapt the \NP-hardness proof in~\cite{DBLP:conf/gd/BinucciDP22} to prove that the result holds also for graphs that have bounded diameter and for graphs with bounded vertex degree. In the latter case, the graphs used in the reduction not only have bounded vertex degree (at most four), but are also subdivisions of triconnected graphs. In \cref{se:conclusions} we list some interesting open problems that stem from our research.

\section{Preliminaries}\label{se:preliminaries}
\noindent \textbf{Edge orientations.} Let $G=(V,E)$ be an undirected graph. An \emph{orientation} $O$ of $G$ is an assignment of a \emph{direction}, also called \emph{orientation}, to each edge of $G$. We denote by $D_O(G)$ the digraph obtained from $G$ by applying the orientation $O$. For each undirected pair $(u,v) \in E$, we write $uv$ if $(u,v)$ is oriented from $u$ to $v$ in $D_O(G)$, and we write $vu$ otherwise. A directed path from a vertex $u$ to a vertex $v$ is denoted by $u \leadsto v$. A vertex of $D_O(G)$ is a \emph{source} (\emph{sink}) if all its edges are outgoing (incoming). An edge $uv$ of $D_O(G)$ is \emph{transitive} if $D_O(G)$ contains a directed path $u \leadsto v$ distinct from the edge $uv$. A digraph $D_O(G)$ is an \emph{$st$-graph} if: (i) it contains a single source $s$ and a single sink $t$, and (ii) it is acyclic. An orientation $O$ such that $D_O(G)$ is an $st$-graph is called an \emph{$st$-orientation}. 


\medskip\noindent\fbox{%
  \parbox{0.98\textwidth}{
\sto\\
\textnormal{\textbf{Input:}} An undirected graph $G=(V,E)$, two vertices $s,t \in V$, and an integer $k \ge 1$.\\
\textnormal{\textbf{Output:}} An $st$-orientation $O$ of $G$ such that the resulting digraph $D_O(G)$ contains at most $k$ transitive edges.
  }%
}

\smallskip\noindent We recall that \sto is \NP-complete already for $k=0$~\cite{DBLP:conf/gd/BinucciDP22}, which hinders tractability in the parameter $k$. Also, in what follows, we always assume that the input graph $G$ is connected, otherwise we can immediately reject the instance as any orientation would give rise to at least one source and one sink for each connected component of $G$.

\smallskip\noindent \textbf{Tree-decompositions.}  Let $(\mathcal{X},T)$ be a pair such that $\mathcal{X}=\{X_i\}_{i \in [\ell]}$ is a collection of subsets of vertices of a graph $G=(V,E)$, called \emph{bags}, and $T$ is a tree whose nodes are in  one-to-one correspondence with the elements of $\mathcal X$. When this creates no ambiguity, $X_i$ will denote both a bag of $\mathcal{X}$ and the node of $T$ whose corresponding bag is $X_i$. The pair $(\mathcal{X},T)$ is a \emph{tree-decomposition} of $G$ if:
\begin{inparaenum}
\item $\bigcup_{i\in [\ell]} X_i = V$,
\item For every edge $uv$ of $G$, there exists a bag $X_i$ that contains both $u$ and~$v$, and
\item For every vertex $v$ of $G$, the set of nodes of $T$ whose bags contain $v$ induces a non-empty (connected) subtree of $T$.
\end{inparaenum}
The \emph{width} of  $(\mathcal{X},T)$ is $\max_{i=1}^\ell {|X_i| - 1}$, while the \emph{treewidth} of $G$, denoted by $\textnormal{tw}(G)$, is the minimum width over all tree-decompositions of $G$. The problem of computing a tree-decomposition of width $\textnormal{tw}(G)$ is fixed-parameter tractable in $\textnormal{tw}(G)$~\cite{DBLP:journals/siamcomp/Bodlaender96}. 
 A tree-decomposition $(\mathcal{X},T)$ of a graph $G$ is \emph{nice} if $T$ is a rooted binary tree with the following additional  properties~\cite{DBLP:journals/jal/BodlaenderK96}:
 \begin{inparaenum}
 \item If a node  $X_i$ of $T$ has two children whose bags are $X_j$ and $X_{j'}$, then $X_i=X_j=X_{j'}$. In this case, $X_i$ is a \emph{join bag}.
 \item If a node $X_i$ of $T$ has only one child~$X_j$, then $X_i \neq X_j$ and there exists a vertex $v \in G$ such that either $X_i = X_j \cup \{v\}$ or $X_i \cup \{v\} = X_j$. In the former case  $X_i$ is an \emph{introduce bag}, while in the latter case  $X_i$ is a \emph{forget bag}.
 \item If a node $X_i$ is the root or a leaf of $T$, then $X_i=\emptyset$. In this case, $X_i$ is a \emph{leaf bag}.
 \end{inparaenum} 
Given a tree-decomposition of width $\tw$ of $G$, a nice tree-decomposition of $G$ with the same width can be computed in $O(\tw \cdot n)$ time~\cite{DBLP:books/sp/Kloks94}.

\section{The \sto Problem Parameterized by Treewidth}\label{se:algo}
In this section, we describe a fixed-parameter tractable algorithm for \sto parameterized by treewidth. In fact, the algorithm we propose can solve a slightly more general problem. Namely, it does not assume that $s$ and $t$ are part of the input, but it looks for an $st$-orientation in which the source and the sink can be any pair of vertices of the input graph. However, if $s$ and $t$ are prescribed, a simple check can be added to the algorithm (we will highlight the crucial point in which the check is needed) to ensure this property. 

Let $G=(V,E)$ be an undirected graph.  A \emph{solution} of the \sto problem is an orientation $O$ of $G$ such that $D_O(G)$ is an $st$-graph with at most $k$ transitive edges.   Let $(\mathcal{X},T)$ be a tree-decomposition of $G$ of width $\tw$. For a bag $X_i \in \mathcal{X}$, we denote by $G[X_i]$ the subgraph of $G$ induced by the vertices of $X_i$, and by $T_i$ the subtree of $T$ rooted at $X_i$. Also, we denote by $G_i$ the subgraph of $G$ induced by all the vertices in the bags of $T_i$. We adopt a dynamic-programming approach performing a bottom-up traversal of $T$. The solution space is encoded into records associated with the bags of  $T$, which we describe in the next section. 

\subsection{Encoding solutions}
\label{subse:recorddefinition}
Before describing the records stored for each bag, we highlight the main challenges about how to encode the partial solutions computed throughout the course of the algorithm. 
Let $v$ be a vertex introduced in a bag $X_i$. Adding $v$ and its incident edges to a partial solution may either turn many (possibly linearly many) forgotten edges into transitive edges and/or it may make the same forgotten edge transitive with respect to arbitrarily many different paths. This is schematically illustrated in \cref{fi:introduce}, where  $X_i$ and its child bag $X_j$ are highlighted by shaded regions. In \cref{fi:introduce-a}, $e_1, \dots, e_s$ are forgotten edges, i.e., edges in $G_i$ but not in $G[X_i]$; if we orient edge $(u,v)$ from $v$ to $u$ and edge $(v,w)$ from $w$ to $v$ all edges $e_1, \dots, e_s$  become transitive. In \cref{fi:introduce-b}, $e$ is a forgotten edge, while $u_1, \dots, u_s$ and $w_1, \dots, w_h$ are vertices of bag $X_j$; orienting the edges $(w_p,v)$ from $w_p$ to $v$ ($1 \leq p \leq h$) and the edges $(v,u_q)$ from $v$ to $u_q$, turns $e$ into a transitive edge with respect to $h \times s$ different paths. In case of \cref{fi:introduce-a} the algorithm cannot afford reconsidering the forgotten edges as they can be arbitrarily many. In case of \cref{fi:introduce-b} the algorithm should avoid counting $e$ multiple times (for each newly created path). To overcome these issues, the algorithm guesses the edges that are transitive in a candidate solution and verifies that no other edge can become transitive during the bottom-up visit of $T$. This is done by suitable records, describe below.

\begin{figure}[t]
	\centering
	\subfloat[]{\includegraphics[width=0.32\columnwidth,page=9]{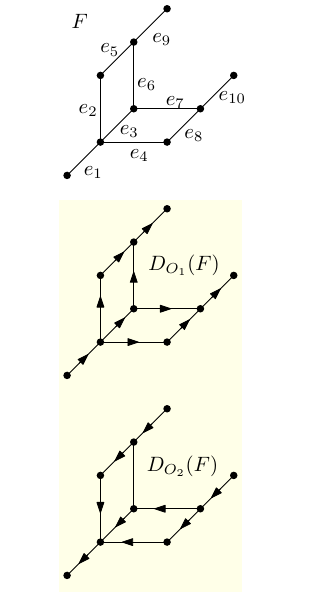}\label{fi:introduce-a}}
	\hfil
	\subfloat[]{\includegraphics[width=0.32\columnwidth,page=10]{st-orientation-NPhardness}\label{fi:introduce-b}}
	\caption{(a) The directed edges $wv$ and $vu$ make all edges $e_1,...,e_s$  transitive. (b) Each pair of directed edges $w_pv$ and $v_qu$, for $p\in [1,h]$ and $q\in [1,s]$, makes $e$ transitive. }\label{fi:introduce}
\end{figure} 

\smallskip
Let $O$ be a solution and consider a bag $X_i \in \mathcal{X}$. The \emph{record} $R_i$ of $X_i$ that \emph{encodes} $O$ represents the interface of the solution $O$ with respect to $X_i$.  For ease of notation, the restriction of $D_O(G)$ to $G_i$ is denoted by $D_i$, and similarly the restriction to $G[X_i]$ is $D[X_i]$. Record $R_i$ stores the following information.

\begin{itemize}
    \item $\mathcal{O}_i$ which is the orientation of $D[X_i]$.  
    \item $\mathcal{A}_i$ which is the subset of the edges of $D[X_i]$ that are transitive in $D_O(G)$. We call such edges \emph{admissible transitive edges} or simply \emph{admissible edges}. The edges of $G_i$ not in $\mathcal{A}_i$ are called \emph{non-admissible}. We remark that an edge of $\mathcal{A}_i$ may not be transitive in $D_i$. 
    \item $\mathcal{P}_i$ which is the set of ordered pairs of vertices $(a,b)$ such that: (i) $a,b \in X_i$, and (ii) $D_i$ contains the path $a \leadsto b$. 
    \item $\mathcal{F}_i$ which is the set of ordered pairs of vertices $(a,b)$ such that: (i) $a,b \in X_i$, and (ii) connecting $a$ to $b$ with a directed path makes a non-admissible edge of $D_i$ to become transitive.  
    \item $c_i$ which is the \emph{cost} of $R_i$, that is, the number of transitive edges in $D_i$. Note that $c_i \ge |\mathcal{A}_i|$. 
    \item $\mathcal{S}_i$ which maps each vertex $v\in X_i$ to a Boolean value $\mathcal{S}_i(v)$ that is true if and only if $v$ is a source in $D_i$. Analogously, $\mathcal{T}_i$ maps each vertex $v\in X_i$ to a Boolean value $\mathcal{T}_i(v)$ that is true if and only if $v$ is a sink in $D_i$.
    \item $\sigma_i$ which is a flag that indicates whether $D_O(G)$ contains a source that belongs to $G_i$ but not to $X_i$. Analogously, $\tau_i$ is a flag that indicates whether $D_O(G)$ contains a sink that belongs to $G_i$ but not to $X_i$.
\end{itemize}

Observe that, for a bag $X_i$, different solutions $O$ and $O'$ of $G$ may be encoded by the same record $R_i$. In this case, $O$ and $O'$ are \emph{equivalent}. Clearly, this defines an equivalent relation on the set of solutions for $G$, and each record represents an equivalence class. The goal of the algorithm is to incrementally construct the set of records (i.e., the quotient set) for each bag rather than the whole set of solutions. More formally, for each bag $X_i \in \mathcal{X}$, we associate a set of records $\mathcal{R}_i=\{R_i^1,...,R_i^h\}$. While this is not essential for establishing fixed-parameter tractability, we further observe that if more records are equal except for their costs, it suffices to keep in $\mathcal{R}_i$ the one whose cost is no larger than any other record. The next lemma easily follows. 

\begin{lemma}\label{le:records}
For a bag $X_i$, the cardinality of $\mathcal{R}_i$ is $2^{O({\tw^2})}$. Also, each record of $\mathcal{R}_i$ has size $O(\tw^2)$.
\end{lemma}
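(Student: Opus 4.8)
The plan is to establish both parts of the statement — the $O(\tw^2)$ bound on the size of a single record and the $2^{O(\tw^2)}$ bound on $|\mathcal{R}_i|$ — by a direct component-by-component counting argument. The essential point to exploit is that every field stored in a record is \emph{local} to the bag $X_i$, whose cardinality satisfies $|X_i| \le \tw + 1$; no field ranges over the (potentially large) set of vertices or edges of $G_i$. Thus every quantity to be bounded is indexed either by the vertices of $X_i$, by the edges of $G[X_i]$, or by the ordered pairs of vertices of $X_i$.

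First I would bound the size of a single record $R_i$. Since $|X_i| \le \tw+1$, the graph $G[X_i]$ has at most $\binom{\tw+1}{2} = O(\tw^2)$ edges, so both the orientation $\mathcal{O}_i$ (one direction per edge) and the admissible set $\mathcal{A}_i \subseteq E(G[X_i])$ (one bit per edge) require $O(\tw^2)$ space. The number of ordered pairs of vertices of $X_i$ is at most $(\tw+1)^2 = O(\tw^2)$, hence the pair-sets $\mathcal{P}_i$ and $\mathcal{F}_i$ each occupy $O(\tw^2)$ space. The Boolean maps $\mathcal{S}_i$ and $\mathcal{T}_i$ have one entry per vertex of $X_i$, contributing $O(\tw)$, while the flags $\sigma_i,\tau_i$ and the cost $c_i$ together contribute $O(1)$ (treating the single integer $c_i$ as occupying a constant number of machine words). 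Summing these terms yields the claimed $O(\tw^2)$ size, with the four quadratic fields being dominant.

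Next I would count the distinct records. By the observation made immediately before the statement, any two records that agree on all fields except the cost are collapsed to the single representative of minimum cost; consequently $c_i$ does not contribute to the count, and it suffices to bound the number of realizable tuples $(\mathcal{O}_i,\mathcal{A}_i,\mathcal{P}_i,\mathcal{F}_i,\mathcal{S}_i,\mathcal{T}_i,\sigma_i,\tau_i)$. There are at most $2^{O(\tw^2)}$ choices for $\mathcal{O}_i$ and for $\mathcal{A}_i$ (orientations, respectively subsets, of the $O(\tw^2)$ edges of $G[X_i]$), at most $2^{O(\tw^2)}$ choices for each of $\mathcal{P}_i$ and $\mathcal{F}_i$ (subsets of the $O(\tw^2)$ ordered pairs), at most $2^{O(\tw)}$ choices for each of $\mathcal{S}_i$ and $\mathcal{T}_i$, and $O(1)$ choices for the two flags. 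Multiplying these bounds, the four quadratic factors dominate and give $|\mathcal{R}_i| \le 2^{O(\tw^2)}$.

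The only genuinely delicate step — the one I would treat most carefully — is the handling of the cost $c_i$. Unlike every other field, $c_i$ is an integer that can grow with $|E(G_i)|$ and is therefore not bounded by any function of $\tw$; a naive count that distinguished records by their cost would fail to give an \fpt bound. This is exactly why the minimum-cost collapse is indispensable: it ensures that $\mathcal{R}_i$ keeps at most one record per realizable assignment of the remaining, bag-local fields, so that the cardinality depends on $\tw$ alone. Once this point is settled, the remainder is routine enumeration over objects indexed by the vertices and edges of $X_i$.
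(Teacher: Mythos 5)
Your proposal is correct and follows essentially the same route as the paper's proof: a direct component-wise count showing that every field of a record is indexed by the $O(\tw)$ vertices or $O(\tw^2)$ edges/ordered pairs of the bag, combined with the minimum-cost collapse so that the unbounded integer $c_i$ does not multiply the count. Your treatment is somewhat more explicit than the paper's (in particular, in spelling out why the cost field must be excluded from the enumeration), but the underlying argument is identical.
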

\begin{proof}
Recall that $G[X_i]$ contains at most $\tw$ vertices and $\tw^2$ edges. 
Observe that the number of possible orientations of the edges of $G[X_i]$ is $O(2^{\tw^2})$. Similarly, the number of possible pairs of vertices (and hence of subsets of edges) of $X_i$ is $O(2^{\tw^2})$. The possible mappings to Boolean values of the vertices in $X_i$ are $O(2^{\tw})$. Hence, a set of distinct records in which there are no two of them that differ only by their cost has size  $2^{O({\tw^2})}$. Finally, the fact that each record has size $O(\tw^2)$ follows directly from the definition.
\end{proof}

\subsection{Description of the algorithm}
\label{subse:descr}

We are now ready to describe our dynamic-programming algorithm over a nice tree-decomposition $(\mathcal{X},T)$ of the input graph $G$. Let $X_i$ be the current bag visited by the algorithm. We compute the records of $X_i$ based on the records computed for its child or children (if any). If the set of records of a bag is empty, the algorithm halts and returns a negative answer. We distinguish four cases based on the type of the bag $X_i$. Observe that, to index the records within $\mathcal{R}_i$, we added a superscript $q \in [h]$ to each record, and we will do the same for all the record's elements.

\smallskip\noindent\textbf{$X_i$ is a leaf bag.} We have that $X_i$ is the empty set and  $\mathcal{R}_i$ consists of only one record, i.e., $\mathcal{R}_i=\{R_i^1=\langle \emptyset, \emptyset, \emptyset, \emptyset, 0, \emptyset,\emptyset, \false, \false \rangle\}$.

\smallskip\noindent\textbf{$X_i$ is an introduce bag.} Let  $X_j =X_i  \setminus \{v\}$ be the child of $X_i$.  Initially, $\mathcal{R}_i=\emptyset$. Next, the algorithm exhaustively extends each record $R_j^p \in \mathcal{R}_j$ to a set of records of $\mathcal{R}_i$ as follows.   Let $\mathcal{O}_v$  be the set of all the possible orientations of the edges incident to $v$ in $G[X_i]$, and similarly let $\mathcal{A}_v$ be the set of all the possible subsets of the edges incident to $v$ in $G[X_i]$. The algorithm considers all possible pairs $(o,t)$ such that $o \in \mathcal{O}_v$ and $t \in \mathcal{A}_v$. For each pair $(o,t)$, we proceed as follows.  
\begin{enumerate}
    \item Let $q=|\mathcal{R}_i|+1$, the algorithm computes a candidate orientation $\mathcal{O}_i^q$ of $G[X_i]$ starting from $\mathcal{O}_j^p$ and orienting the edges of $v$ according to $o$.
    \item  Similarly, it computes the candidate set of admissible edges $\mathcal{A}_i^q$ starting from $\mathcal{A}_j^p$ and adding to it the edges in $t$. 
    \item Next, it sets the candidate cost $c_i^q=c_j^p+|t|$.
    \item Let the \emph{extension} $\langle \mathcal{O}_i^q,\mathcal{A}_i^q,c_i^q \rangle$ be \emph{valid} if: 
    \begin{inparaenum}
        \item $c_i^q\le k$;
        \item there is no pair $(a,b)\in \mathcal{P}_j^p$ so that $bv,va\in D[X_i^q]$;
        \item there is no pair $(a,b)\in \mathcal{F}_j^p$ so that $av,vb\in D[X_i^q]$.
    \end{inparaenum}  
    Clearly, if an extension is not valid, the corresponding record cannot encode any solution; namely, condition (a) ensures that the candidate cost does not exceed $k$, condition (b) guarantees the absence of cycles, condition (c) guarantees that no non-admissible edge becomes transitive. Hence, if an extension is not valid, the algorithm discards it and continues with the next pair $(o,t)$. 
    \item Instead, if the extension is valid, the algorithm computes the record $R_i^q=\langle \mathcal{O}_i^q, \mathcal{A}_i^q, \mathcal{P}_i^q, \mathcal{F}_i^q,$ $c_i^q, \mathcal{S}_i^q, \mathcal{T}_i^q,$ $\sigma_i^q,$ $ \tau_i^q\rangle$ of $\mathcal{R}_i$, where $\sigma_i^q=\sigma_j^p$, $\tau_i^q=\tau_j^p$ (recall that $X_j \subset X_i$). To complete the record $R_i^q$, it remains to compute $\mathcal{S}_i^q$, $\mathcal{T}_i^q$, $\mathcal{P}_i^q$ and $\mathcal{F}_i^q$. 
    \begin{enumerate}
        \item For each vertex $w \in X_j$, we set $\mathcal{S}_i^q(w)=\true$  if and only if $\mathcal{S}_j^p(w)=\true$ and there is no edge of $v$ oriented from $v$ to $w$ in $D[X_i^q]$ (which would make $w$ not a source anymore). Similarly, for each vertex $w \in X_j$, we set $\mathcal{T}_i^q(w)=\true$ if and only if $\mathcal{T}_j^p(w)=\true$ and there is no edge of $v$ oriented from $w$ to $v$ in $D[X_i^q]$. Finally, we set  $\mathcal{S}_i^q(v)=\true$  if and only if $v$ is a source in $D[X_i^q]$ (as by the definition), and we set  $\mathcal{T}_i^q(v)=\true$ if and only if $v$ is a sink in $D[X_i^q]$.
        \item  We initially set $\mathcal{P}_i^q=\emptyset$. We recompute the paths from scratch as follows. We build an auxiliary digraph $D^*$ which we initialize with $D[X_i^q]$. 
        We then add to $D^*$ the information about paths in $\mathcal{P}_j^p$. Namely, for each $(a,b)\in \mathcal{P}_j^p$, we add an edge $ab$ to $D^*$ (if it does not already exists). Once this is done, for each pair $u,w\in X_i \times X_i$ for which there is a path $u \leadsto w$ in $D^*$, we add the pair $(u,w)$ to $\mathcal{P}_i^q$.
        \item Consider now $\mathcal{F}_i^q$. We initially set $\mathcal{F}_i^q=\mathcal{F}_j^p$. Observe that the addition of $v$ might have created new pairs of vertices that should belong to $\mathcal{F}_i^q$. Namely, for each pair $(a,b) \in \mathcal{F}_j^p$, we verify what are the vertices $c$ such that $D[X_i^q]$ contains a path $a \leadsto c$ while $D[X_j^p]$ does not (observe that $a \leadsto c$ contains $v$, possibly $c=v$); for each such vertex, we add $(c,b)$ to $\mathcal{F}_i^q$. See \cref{fi:introduce-5c-a} for an illustration. Similarly, we verify what are the vertices $d$ such that $D[X_i^q]$ contains a path $d \leadsto b$ while $D[X_j^p]$ does not (again $d \leadsto b$ contains $v$, possibly $d=v$); for each such vertex, we add $(a,d)$ to $\mathcal{F}_i^q$. Finally, we consider all the edges incident to $v$ and that are not in $\mathcal{A}_i^q$. These edges are not admissible and we should further update $\mathcal{F}_i^q$ accordingly. This can be done as follows: we consider each edge incident to $v$ not in $\mathcal{A}_i^q$, for each  such an edge $e$ we verify what are the pairs of vertices in $X_i$ (including $e$'s endpoints) such that connecting them with a path makes $e$ transitive, we add such pairs to $\mathcal{F}_i^q$ if not already present. See \cref{fi:introduce-5c-b} for an illustration.
    \end{enumerate} 
\end{enumerate}

\begin{figure}[t]
	\centering
	\subfloat[]{\includegraphics[width=0.4\columnwidth,page=1]{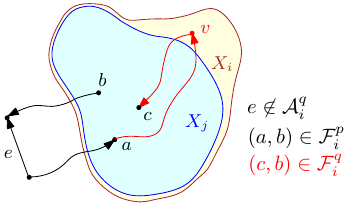}\label{fi:introduce-5c-a}}
	\hfil
	\subfloat[]{\includegraphics[width=0.4\columnwidth,page=2]{introduce-5c.pdf}\label{fi:introduce-5c-b}}
	\caption{Illustration of Step 5c of the algorithm when $X_i$ is an introduce bag.}\label{fi:introduce-5c}
\end{figure}

\smallskip\noindent\textbf{$X_i$ is a forget bag.} Let  $X_j =X_i \cup \{v\}$ be the child of $X_i$. The algorithm computes $\mathcal{R}_i$ by exhaustively merging records of $\mathcal{R}_j$ as follow. 
\begin{enumerate}
    \item For each $R_j^p\in \mathcal{R}_j$, we remove from $\mathcal{O}_j^p$ and $\mathcal{A}_j^p$ all the edges incident to $v$ and from $\mathcal{P}_j^p$ and $\mathcal{F}_j^p$ all the pairs where one of the vertices is $v$. Observe that due to this operation, there might now be records that are identical except possibly for their costs. Among them, we only keep one record whose cost is no larger than any other record.
    \item Let $R_j^p$ be a record of $\mathcal{R}_j$ that was not discarded by the procedure above. If $\mathcal{S}_j^p(v) \wedge \sigma_j^p$, we discard $R_j^p$ (because the encoded orientation would contain two sources), else we set $\sigma_j^p=\true$ (because $v$ is a source). Similarly, if  $\mathcal{T}_j^p(v) \wedge \tau_j^p$, we discard $R_j^p$, else we set $\tau_j^p=\true$. At this point, if the record has not been discarded yet and vertices $s$ and $t$ are prescribed, we can add the following check. If $\mathcal{S}_j^p(v) \wedge \sigma_j^p$, then $v$ is a source, hence if $v \neq s$, we discard the record. Analogously, if $\mathcal{T}_j^p(v) \wedge \tau_j^p$, then $v$ is a sink, hence if $v \neq t$, we discard the record. 
    \item Finally, we remove from $\mathcal{S}_j^p$ and $\mathcal{T}_j^p$ the values $\mathcal{S}_j^p(v)$ and $\mathcal{T}_j^p(v)$.
    \item All the records that have not been discarded and have been updated according to the above procedure are added to $\mathcal{R}_i$. 
\end{enumerate}

\smallskip\noindent\textbf{$X_i$ is a join bag.} Let  $X_j=X_{j'}$ be the two children of $X_i$. The algorithm computes $\mathcal{R}_i$ by exhaustively checking if a pair of records, one from $X_j$ and one from $X_{j'}$, can be merged together. For each pair  $R_j^p$ and $R_{j'}^{p'}$, we proceed as follows.
\begin{enumerate}
    \item We initially set $\mathcal{R}_i=\emptyset$. 
    The two records $R_j^p$ and $R_{j'}^{p'}$ are \emph{mergeable} if: 
    \begin{inparaenum}
        \item $\mathcal{O}_j^p= \mathcal{O}_{j'}^{p'}$;
        \item $\mathcal{A}_j^p= \mathcal{A}_{j'}^{p'}$;
        \item $c_j^p+c_{j'}^{p'}-|\mathcal{A}_j^p| \le k$;
        \item there is no pair $(a,b) \in \mathcal{P}_j^p$ such that $(b,a) \in \mathcal{P}_{j'}^{p'}$;
        \item there is no pair $(a,b) \in \mathcal{P}_j^p$ such that $(a,b) \in \mathcal{F}_{j'}^{p'}$;
        \item there is no pair $(a,b) \in \mathcal{P}_{j'}^{p'}$ such that $(a,b) \in \mathcal{F}_j^p$;
        \item $\neg(\sigma_j^p \wedge \sigma_{j'}^{p'})$;
        \item $\neg(\tau_j^p \wedge \tau_{j'}^{p'})$.
    \end{inparaenum} 
    Conditions \textbf{a}-\textbf{b} are obviously necessary to merge the records. Condition \textbf{c} guarantees that the number of transitive edges (avoiding double counting the admissible edges in $X_i$) is at most $k$. Condition \textbf{d} guarantees the absence of cycles. Conditions \textbf{e}-\textbf{f} guarantee that no non-admissible edge becomes transitive. Conditions \textbf{g}-\textbf{h} guarantee that the resulting orientation contains at most one source and one sink. 
    If the two records are not mergeable, we discard the pair and proceed with the next one. Otherwise we create a new record $R_i^q$, with $q=|\mathcal{R}_i|+1$, and continue to the next step.
    \item Based on the previous discussion, we can now compute $R_i^q$ as follows:
    \begin{inparaenum}
        \item $\mathcal{O}_i^q=\mathcal{O}_j^p$;
        \item $\mathcal{A}_i^q=\mathcal{A}_j^p$;
        \item $c_i^q=c_j^p+c_{j'}^{p'}-|\mathcal{A}_j^p|$;
        \item For each pair $(a,b)$ of vertices of $X_i$, we add it to $\mathcal{P}_i^q$ if it is contained in $\mathcal{P}_j^p$ or in $\mathcal{P}_{j'}^{p'}$.
        \item For each pair $(a,b)$ of vertices of $X_i$, we add it to $\mathcal{F}_i^q$ if $\mathcal{F}_j^p(a,b) \vee \mathcal{F}_{j'}^{p'}(a,b)$.
        \item For each vertex $v$ of $X_i$, we set $\mathcal{S}_i^q(v)= \mathcal{S}_j^p(v) \wedge \mathcal{S}_{j'}^{p'}(v)$;
        \item For each vertex $v$ of $X_i$, we set $\mathcal{T}_i^q(v)= \mathcal{T}_j^p(v) \wedge \mathcal{T}_{j'}^{p'}(v)$;
        \item $\sigma_i^q = \sigma_j^p \vee \sigma_{j'}^{p'}$;
        \item $\tau_i^q = \tau_j^p \vee \tau_{j'}^{p'}$.
    \end{inparaenum}
\end{enumerate}

The next lemma establishes the correctness of the algorithm.

\begin{lemma}\label{le:correctness}
Graph $G$ admits a solution for \sto if and only if the algorithm terminates after visiting the root of $T$. Also, the algorithm outputs a solution, if any.
\end{lemma}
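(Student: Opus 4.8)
The plan is to prove both directions of the equivalence by relating solutions of \sto to records via the notion of encoding, and then arguing that the dynamic program faithfully maintains, at every bag, the complete set of records that encode some valid partial solution. Concretely, I would fix a nice tree-decomposition $(\mathcal{X},T)$ and prove by induction on the structure of $T$ (processed bottom-up) the following invariant: \emph{for every bag $X_i$, the set $\mathcal{R}_i$ computed by the algorithm is exactly the set of records $R_i$ for which there exists an orientation $O$ of $G_i$ such that $R_i$ encodes $O$ and $O$ extends to a solution of the full graph $G$.} More precisely, since the algorithm works bottom-up without knowledge of the future, the cleaner invariant to carry is that $\mathcal{R}_i$ equals the set of records that encode some orientation of $G_i$ which is ``locally consistent'' (acyclic, with at most $k$ transitive edges counting only admissible ones, at most one source/sink-so-far, and with all non-admissible edges of $G_i$ still non-transitive). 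The soundness direction (every record in $\mathcal{R}_i$ encodes such an orientation) and the completeness direction (every such orientation is encoded by some record in $\mathcal{R}_i$) are then both established by the induction.

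The base case is the leaf bag, where $G_i$ is empty and the single record $\langle \emptyset,\dots\rangle$ trivially encodes the empty orientation. For the inductive step I would treat each of the four bag types in turn, showing that the update rules are both sound and complete with respect to the invariant. For an \textbf{introduce bag} I must verify that enumerating all orientations $o$ of the new edges and all guesses $t$ of which of them are admissible produces exactly the locally consistent extensions: the validity checks (b) and (c) precisely rule out the creation of a directed cycle through $v$ and the creation of a transitive non-admissible edge through $v$, while the recomputation of $\mathcal{P}_i^q$ and $\mathcal{F}_i^q$ correctly propagates reachability and ``forbidden-connection'' information. For a \textbf{forget bag} I argue that deleting $v$'s edges and pairs, collapsing records that now differ only in cost (keeping the cheapest), and updating the source/sink flags $\sigma,\tau$ exactly captures the transition where $v$ leaves the interface but its contribution to sources/sinks and to the count of transitive edges is retained. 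For a \textbf{join bag} I check that the mergeability conditions (d)--(f) enforce acyclicity and the non-transitivity of non-admissible edges across the two branches, that (c) counts admissible edges in $X_i$ exactly once, and that (g)--(h) forbid two sources or two sinks; the key subtlety is that $\mathcal{P}$ must be understood as \emph{reachability within the current subgraph} so that its union over the two children, together with condition (d), correctly certifies acyclicity of the combined orientation. Finally, I would observe that at the root bag (empty, with $G_r = G$), a nonempty $\mathcal{R}_r$ witnesses a globally valid orientation: by the invariant such a record encodes an orientation of all of $G$ that is acyclic, has at most $k$ transitive edges, and—because the source/sink flags were tracked through every forget bag—has exactly one source and one sink. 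Reading off the stored orientation $\mathcal{O}$ along the traversal (or, when $s,t$ are prescribed, using the extra forget-bag check) yields the claimed solution, which handles the ``outputs a solution'' clause.

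The main obstacle I anticipate is justifying the correctness of the $\mathcal{F}_i$ bookkeeping, which is the technical heart of the whole construction. The entire design rests on the claim that tracking pairs $(a,b)$ such that \emph{forcing a directed $a\leadsto b$ path would make some non-admissible forgotten edge transitive} is sufficient to guarantee, globally, that no non-admissible edge ever becomes transitive—even though the algorithm has discarded all information about the forgotten edges themselves (the difficulty highlighted in \cref{fi:introduce}). I would need to prove a structural lemma stating that any directed path realizing a transitivity of a forgotten non-admissible edge $e$, when restricted to the current interface, induces a consecutive pair of interface vertices $(a,b)$ that lies in $\mathcal{F}$; this requires exploiting the connectivity property of tree-decompositions (every forgotten vertex and edge is ``separated'' from the rest of the graph by the current bag) to show that such a path must enter and leave $G_i$ only through $X_i$. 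Establishing that this pairwise, interface-only encoding loses no relevant information—and that the introduce/forget/join updates maintain it exactly—is where the bulk of the careful argument lies; the counting condition (c) guarding against double-counting admissible edges at join bags is a comparatively routine inclusion–exclusion check by contrast.
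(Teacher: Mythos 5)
Your overall strategy---a bottom-up induction that maintains, at every bag, both soundness and completeness of the stored record set---is in substance the same as the paper's proof, which simply splits the two halves of your invariant into the two directions of the equivalence (reconstruct-and-verify via pointers, and no-wrongful-discards). The genuine problem lies in the join-bag case, exactly the step you wave through as a ``key subtlety.'' Your claim that the union of the two children's $\mathcal{P}$-sets, together with mergeability condition \textbf{d}, ``correctly certifies acyclicity of the combined orientation'' is false. Consider a join bag with $X_i=X_j=X_{j'}=\{x_1,y_1,x_2,y_2\}$ and no edges inside $G[X_i]$, where $G_j$ contains forgotten vertices $p,q$ with directed paths $x_1\to p\to y_1$ and $x_2\to q\to y_2$, while $G_{j'}$ contains forgotten vertices $r,w$ with directed paths $y_1\to r\to x_2$ and $y_2\to w\to x_1$. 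Then $\mathcal{P}_j^p=\{(x_1,y_1),(x_2,y_2)\}$ and $\mathcal{P}_{j'}^{p'}=\{(y_1,x_2),(y_2,x_1)\}$; no pair of one set appears reversed in the other, all remaining mergeability conditions hold (the $\mathcal{F}$-sets are empty, all sources and sinks lie in the bag), yet $D_j\cup D_{j'}$ is a directed $8$-cycle. Condition \textbf{d} only detects cycles that decompose into one segment per child; a cycle alternating four or more times between the two sides is invisible to it. For the same reason, the union rule of item \textbf{2.d} does not preserve the reading of $\mathcal{P}$ your invariant needs (``reachability within the current subgraph''): cross-compositions such as $a\leadsto b$ in $D_j$ followed by $b\leadsto c$ in $D_{j'}$ yield reachable pairs $(a,c)$ that the merged record omits, so the checks performed higher in the tree (items \textbf{4.b}--\textbf{4.c} of the introduce case, conditions \textbf{d}--\textbf{f} of later joins) run on incomplete data and can miss cycles and transitive non-admissible edges. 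The same defect hits $\mathcal{F}$, which at a join is only unioned (item \textbf{2.e}), never composed with the newly created reachabilities, in contrast with item \textbf{5.c} of the introduce case; this also undermines your proposed structural lemma for $\mathcal{F}$, since a path witnessing transitivity of a forgotten edge may leave and re-enter $G_i$ several times, in which case no single interface pair lies in $\mathcal{F}_i$ at $X_i$ and correctness must instead rely on $\mathcal{F}$ being re-composed at every later bag.

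Consequently your induction cannot be closed with the update rules as stated: either one proves the join invariant from the union rule (impossible, by the counterexample above), or one repairs the join step---compute $\mathcal{P}_i^q$ as the transitive closure of $\mathcal{P}_j^p\cup\mathcal{P}_{j'}^{p'}$ over an auxiliary digraph on $X_i$ (as the introduce case does with $D^*$), declare the pair non-mergeable whenever that auxiliary digraph is cyclic, and close $\mathcal{F}_i^q$ under composition with the new paths---and then proves the invariant for the repaired rules. You are not being held to a higher standard than the authors met: the paper's own proof glosses over precisely this point (``if the cycle was created in a join bag the argument is analogous''), so your plan founders where the published argument is weakest; but a proof of \cref{le:correctness} that asserts the join claim without this repair has a hole, not an omitted routine verification. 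A separate, more minor issue: your invariant cannot be literal equality between $\mathcal{R}_i$ and the set of records of locally consistent orientations, because the algorithm deliberately keeps only a cheapest record among those agreeing on all other fields; you need a dominance formulation (for every locally consistent orientation of $G_i$ there is a stored record identical in all fields except possibly a smaller cost), which is what the paper's forget-bag argument implicitly uses.
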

\begin{proof}
($\rightarrow$) Suppose that the algorithm terminates after visiting the root bag $X_\rho$ of $T$. We reconstruct a solution $O$ of $G$ as follows. We can assume that our algorithm  stores additional pointers for each record (a common practice in dynamic programming), such that each record has a single outgoing pointer (and potentially many incoming pointers). Consider a record $R_i^q$ of a bag $X_i$. If $X_i$ is an introduce bag, there is only one record $R_j^p$ of the child bag $X_j$ from which $R_i^q$ was generated and the pointer links $R_i^q$ and $R_j^p$. If $X_i$ is forget bag, there might be multiple records that have been merged into $R_i^q$ and in this case the pointer link $R_i^q$ with one of these records with minimum cost. If $X_i$ is a join bag, there are two mergeable records $R_j^p$ and $R_{j'}^{p'}$ that have been merged together, and the pointer links $R_i^q$ to $R_j^p$ and $R_{j'}^{p'}$. With these pointers at hand, we can apply a top-down traversal of $T$, starting from the single (empty) record of the root bag $X_\rho$ and reconstruct the corresponding orientation $O$. Namely, when visiting an introduce bag and the corresponding record, we orient the edges of the introduced vertex $v$ according to the orientation $O_v$ defined by the record.

We now claim that $D_O(G)$ is an $st$-graph with at most $k$ transitive edges. Suppose first, for a contradiction, that $D_O(G)$ contains more than one source. Let $s$ and $s'$ be two sources of $D_O(G)$. Then $\mathcal{S}_i^q(s)=\false$ in the bag $X_i$ in which $s$ has been forgotten, and similarly for $\mathcal{S}_i^q(s')$. This is however not possible by construction of  $\mathcal{S}_i^q$. Thus,  either the record $R_i^q$ has been discarded because $S_j^p(v) \wedge \sigma_j^p$ (see item \textbf{2} when $X_i$ is a forget bag) or $\sigma_j^p=\false$. The first case contradicts the fact that $R_i^q$ is a record used to reconstruct $O$. The second case implies that $s'$ has not been encountered; however, in this latter case the algorithm sets $\sigma_j^p=\true$, hence some descendant record will be discarded as soon as $s'$ is forgotten, again contradicting the fact that we are considering records with pointers up to the root bag. A symmetric argument shows that $D_O(G)$ contains a single sink. We next argue that $D_O(G)$ is acyclic. Suppose, again for a contradiction, that $D_O(G)$ contains a cycle. In particular, the cycle was created either in an introduce bag or in a join bag. In the former case, let $v$ be the last vertex of this cycle that has been introduced in a bag $X_i$. Let $a,b$ be the neighbors of $v$ that are part of the cycle, and w.l.o.g. assume that the edges are $va$ and $bv$. It must be $\mathcal{P}_i^q$ does not contain the  pair $(a,b)$, otherwise we would have discarded this particular orientation for the edges incident to $v$ (see item \textbf{4.b} when $X_i$ is an introduce bag). On the other hand, one easily verifies that when introducing a vertex $v$, all the new paths involving $v$ are computed from scratch (see item \textbf{5.b} when $X_i$ is an introduce bag), and, similarly, when joining two bags, the existence of a path in one of the two bags is correctly reported in the new record (see item \textbf{2.d} when $X_i$ is a join bag). If the cycle was created in a join bag the argument is analogous, in particular, observe that we verify that there is no path contained in the record of one of the child bags such that the same path with reversed direction exists in the record of the other child bag (see item \textbf{1.d} when $X_i$ is a join bag).  We conclude this direction of the proof by showing that $D_O(G)$ contains at most $k$ transitive edges. Observe first that the cost of the record ensures that at most $k$ edges of $G$ are part of some set of admissible edges. Suppose, for a contradiction, that $D_O(G)$ contains more than $k$ transitive edges. Then there is a bag $X_i$ and a record $R_i^q$ in which a non-admissible edge became transitive. Also, $X_i$ is either an introduce or a join bag. If $X_i$ introduced a vertex $v$, observe that all the newly introduced edges are incident to $v$. On the other hand, the algorithm discarded the orientations of the edges of $v$ for which there is a pair $(a,b) \in \mathcal{F}_j^p$ (with $X_j$ being the child of $X_i$) so that $av,vb \in D[X_i^q]$ (see item \textbf{4.c} when $X_i$ is an introduce bag). Then either the orientation was discarded, which contradicts the fact that we are considering a record used to build the solution, or $\mathcal{F}_j^p$ missed the pair $(a,b)$. Again one verifies this second case is not possible, because the new pairs that are formed in an introduce bag are correctly identified (see item \textbf{5.c} when $X_i$ is an introduce bag) by the algorithm and similarly for join bags (see item \textbf{2.e} when $X_i$ is a join bag). If $X_i$ is a join bag, the argument is analogous, in particular, we verified that there is no path in one of the two child records that makes transitive a non-admissible edge in the other child record (see items \textbf{1.e} and \textbf{1.f} when $X_i$ is a join bag). This concludes the first part of the proof.

\smallskip \noindent
($\leftarrow$) It remains to prove that, if $G$ admits a solution $O$, then the algorithm terminates after visiting the root $X_\rho$ of $T$. If this were not the case, there would be a bag $X_i$ of $T$ and a candidate record  that encodes $O$, such that the record has been incorrectly discarded by the algorithm; we show that this is not possible. Suppose first that $X_i$ is an introduce bag. Then a candidate record is discarded if the cost exceeds $k$, or if a cycle is created, or if a non-admissible edge becomes transitive (see the conditions of item \textbf{4} when $X_i$ is an introduce bag). In all cases the candidate record does not encode a solution. If $X_i$ is a forget bag, we may discard a candidate record if it is identical to another but has a non-smaller cost (see item \textbf{1} when $X_i$ is a forget bag). Hence we always keep a record that either encodes the solution at hand or a solution with fewer transitive edges but with exactly the same interface at $X_i$. Also, we may discard a record if the forgotten vertex $v$ is a source and $G_i$ already contains a source (see item \textbf{2} when $X_i$ is a forget bag). This is correct, because no further edge can be added to $v$ after it is forgotten. A symmetric argument holds for the case in which a record is discarded due to $v$ being a sink. Finally, if $X_i$ is a join bag, pairs of  records of its children bags are discarded if not mergeable (see the conditions of item \textbf{1} when $X_i$ is a join bag). One easily verifies that failing one of the conditions for mergeability implies that the record does not encode a solution (see also the discussion after item \textbf{1}).
\end{proof}

The next theorem summarizes our contribution.

\begin{theorem}label{th:main}
Given an input graph $G=(V,E)$ of treewidth $\omega$ and an integer $k \ge 0$, there is an algorithm that either finds a solution of \sto or reject the input in time $2^{O(\tw^2)}\cdot n$. 
\end{theorem}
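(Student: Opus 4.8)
The correctness of the algorithm has already been settled by \cref{le:correctness}, so what remains is to establish the claimed running time. The plan is to first bound the cost of building a suitable nice tree-decomposition, then bound the work performed at a single bag for each of the four bag types, and finally multiply by the number of bags.

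First I would compute a tree-decomposition of $G$. To stay within the target bound I would not rely on an exact algorithm, but on a constant-factor approximation producing a decomposition of width $O(\tw)$ in time $2^{O(\tw)}\cdot n$; a standard postprocessing step then turns it into a nice tree-decomposition $(\mathcal{X},T)$ of width $w=O(\tw)$ with $O(\tw \cdot n)$ bags, in linear time. Since $w=O(\tw)$, every estimate of the form $2^{O(w^2)}$ or $2^{O(w)}$ coincides with $2^{O(\tw^2)}$ or $2^{O(\tw)}$, and \cref{le:records} guarantees that $|\mathcal{R}_i|=2^{O(\tw^2)}$, with each record of size $O(\tw^2)$, throughout the computation.

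Next I would bound the time spent at a single bag $X_i$, showing that in each case it is $2^{O(\tw^2)}$. A leaf bag costs $O(1)$. For an introduce bag, the algorithm scans the $2^{O(\tw^2)}$ records of the child and, for each, the $2^{O(\tw)}$ pairs $(o,t)$ consisting of an orientation and an admissible subset of the $O(\tw)$ edges incident to the new vertex; each resulting extension is processed in $\mathrm{poly}(\tw)$ time, since the validity checks against $\mathcal{P}_j^p$ and $\mathcal{F}_j^p$ inspect $O(\tw^2)$ pairs, the set $\mathcal{P}_i^q$ is recomputed by a reachability computation on the auxiliary digraph $D^*$ with $O(\tw)$ vertices and $O(\tw^2)$ edges, and the update of $\mathcal{F}_i^q$ iterates over $O(\tw^2)$ pairs and $O(\tw)$ non-admissible edges. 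Hence an introduce bag costs $2^{O(\tw^2)}\cdot 2^{O(\tw)}\cdot \mathrm{poly}(\tw)=2^{O(\tw^2)}$. A forget bag scans the $2^{O(\tw^2)}$ child records, deletes in $O(\tw^2)$ time the entries mentioning the forgotten vertex, and then removes duplicates keeping a minimum-cost representative, which can be done by sorting or hashing in $2^{O(\tw^2)}\cdot\mathrm{poly}(\tw)$ time, the source/sink bookkeeping being $O(\tw)$ per record. A join bag examines the $2^{O(\tw^2)}\cdot 2^{O(\tw^2)}=2^{O(\tw^2)}$ pairs formed by taking one record from each child, testing mergeability and, when successful, assembling the merged record in $O(\tw^2)$ time each. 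Thus every bag is handled in $2^{O(\tw^2)}$ time.

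Finally, multiplying the per-bag bound $2^{O(\tw^2)}$ by the number $O(\tw \cdot n)$ of bags, and absorbing the factor $\tw$ into $2^{O(\tw^2)}$, yields $2^{O(\tw^2)}\cdot n$ for the dynamic programming; adding the decomposition time $2^{O(\tw)}\cdot n\le 2^{O(\tw^2)}\cdot n$ and the $\mathrm{poly}(\tw)\cdot n$ cost of the top-down pointer traversal that reconstructs an explicit solution leaves the total running time at $2^{O(\tw^2)}\cdot n$, as claimed. I expect the only delicate point to be verifying that all the bookkeeping inside a single bag---most notably the from-scratch recomputation of $\mathcal{P}_i^q$ and the update of $\mathcal{F}_i^q$ at introduce bags---is polynomial in $\tw$, so that it is dominated by the $2^{O(\tw^2)}$ factor coming from the number of records rather than inflating the exponent.
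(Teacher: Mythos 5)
Your proposal is correct and follows essentially the same route as the paper's proof: invoke the correctness lemma, compute an approximate tree-decomposition in single-exponential time (the paper cites Korhonen's 2-approximation), convert it to a nice one, bound the work per bag type by $2^{O(\tw^2)}$, and multiply by the number of bags. The minor refinements you add (the tighter $2^{O(\tw)}$ count of extensions at introduce bags, the explicit handling of the $O(\tw \cdot n)$ bag count, and the cost of the reconstruction traversal) are consistent with, and do not change, the paper's argument.
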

\begin{proof}
The correctness of the algorithm has been proved in \cref{le:correctness}. Concerning the time complexity, we begin by using a recent result by Korhonen~\cite{Korhonen21}, which provides a single-exponential  algorithm for computing a $2$-approximation of the treewidth $\textnormal{tw}(G)=\tw-1$ of $G$. Given a tree-decomposition of width $O(\tw)$ of $G$, a nice tree-decomposition of $G$ with the same width can be computed in $O(\tw \cdot n)$ time~\cite{DBLP:books/sp/Kloks94}. 

We now analyze the time complexity of our algorithm for each type of bag. 
Leaf bags are trivially processed in $O(1)$ time. 
For an introduce bag, we iterate over the $2^{O(\tw^2)}$ records of the child bag (see \cref{le:records}), and for each of them we consider all possible extensions, which are again $2^{O(\tw^2)}$. For each valid extension, creating a single record from it takes $\tw^{O(1)}$ time. 
For a forget bag, we update the $2^{O(\tw^2)}$ records of its child bag, and then we iteratively look for pairs of records that can be merged. This takes again $2^{O(\tw^2)}$ time. Also, updating each merged record takes $\tw^{O(1)}$ time. 
For join bags we iteratively look for pairs of records (one for each of the two child bags) that are mergiable. Since there are $2^{O(\tw^2)}$ pairs and checking the mergiability takes $\tw^{O(1)}$ time (as well as eventually computing the merged record), the procedure takes again $2^{O(\tw^2)}$ time. Since $T$ contains $O(n)$ nodes, the statement follows.
\end{proof}

\section{The Complexity of the \ntstolong Problem for Graphs of Bounded Diameter and Bounded Degree}\label{se:hardness}

We begin by recalling the special case of \sto considered in~\cite{DBLP:conf/gd/BinucciDP22}. An $st$-orientation $O$ of a graph $G$ is \emph{non-transitive} if $D_O(G)$ does not contain transitive edges.

\medskip\noindent\fbox{%
  \parbox{0.98\textwidth}{
\ntstolong~(\ntsto)\\
\textnormal{\textbf{Input:}} An undirected graph $G=(V,E)$, and two vertices $s,t\in V$.\\
\textnormal{\textbf{Output:}} An non-transitive $st$-orientation $O$ of $G$ such that vertices $s$ and $t$ are the source and sink of $D_O(G)$, respectively.
  }%
}

\medskip\noindent The hardness proof of \ntsto in~\cite{DBLP:conf/gd/BinucciDP22} exploits a reduction from \naelong (\nae)~\cite{10.1145/800133.804350}. Recall that the input of \nae is a pair $\langle X,\varphi \rangle$ where $X$ is a set of boolean variables and $\varphi$ is a set of clauses, each composed of three literals out of $X$, and the problem asks for an assignment of the variables in $X$ so that each clause in $\varphi$ is composed of at least one true variable and one false variable.

In this section, we show that \ntsto is \NP-hard even for graphs of bounded diameter and for graphs of bounded vertex degree that are subdivisions of triconnected graphs. To prove our results, we first summarize the construction used in~\cite{DBLP:conf/gd/BinucciDP22}. 

\subsection{A Glimpse into the Hardness Proof of  \ntsto}\label{sse:glimpse}

The construction in~\cite{DBLP:conf/gd/BinucciDP22} adopts three types of gadgets, which we recall below. Given an edge $e$ of a digraph $D$ such that $e$ has an end-vertex $v$ of degree 1, we say that $e$ \emph{enters} $D$ if it is outgoing with respect to $v$, and we say that $e$ \emph{exits} $D$ otherwise. Similarly, given a directed edge $e=uv$, we say that $e$ \emph{exits} $u$ and that $e$ \emph{enters} $v$.

\begin{itemize}
\item The \emph{fork gadget} $F$ is depicted in \cref{fi:nphardness-a}. See Lemma~1 of~\cite{DBLP:conf/gd/BinucciDP22}. Namely, if $F$ does not contain $s$ or $t$ (the source and sink prescribed in the input), then in any non-transitive orientation $O$ of a graph $G$ containing $F$, either $e_1$ enters $F$ and $e_9$, $e_{10}$ exit $F$, or vice versa.  \cref{fi:nphardness-a} depicts $F$, $D_{O_1}(F)$ and $D_{O_2}(F)$, where $O_1$ and $O_2$ are the two $st$-orientations admitted by $F$.
\item The \emph{variable gadget} $G_x$ associated to a variable $x\in X$ is shown in \cref{fi:nphardness-b}; observe it contains the designated vertices $s$ and $t$. Its crucial property is stated in Lemma~2 of~\cite{DBLP:conf/gd/BinucciDP22}. Namely, in any non-transitive $st$-orientation $O$ of a graph $G$ containing $G_x$, either $x$ exists $G_x$ and $\overline{x}$ enters $G_x$, or vice-versa. 
\item The \emph{split gadget} $S_k$ is shown in \cref{fi:nphardness-c}; it consists of $k-1$ fork gadgets chained together, for some fixed $k > 0$. The crucial property of this gadget is described in Lemma~3 of~\cite{DBLP:conf/gd/BinucciDP22}. Namely, in any non-transitive $st$-orientation $O$ of a graph $G$ containing $S_k$, either $x$ (the \emph{input edge} of $S_k$) enters $S_k$ and the edges $e_{9}$ and $e_{10}$ of the fork gadgets $F_1,...,F_{k-1}$ incident to one degree-1 vertex (the \emph{outgoing edges} of $S_k$) exit $S_k$, or vice-versa.  
\end{itemize}

\begin{figure}[t]
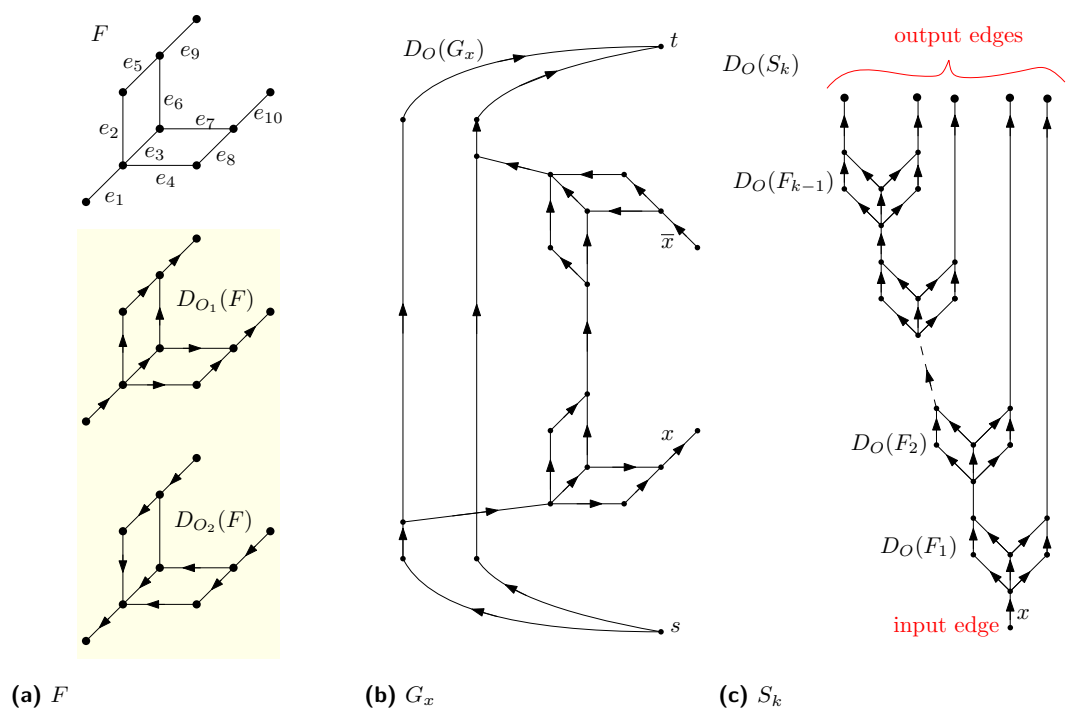

	\centering
	\subfloat[$F$]{\includegraphics[width=0.33\columnwidth,page=1]{st-orientation-NPhardness}\label{fi:nphardness-a}}
	\hfil
	\subfloat[$G_x$]{\includegraphics[width=0.33\columnwidth,page=2]{st-orientation-NPhardness}\label{fi:nphardness-b}}
	\hfil
	\subfloat[$S_k$]{\includegraphics[width=0.33\columnwidth,page=3]{st-orientation-NPhardness}\label{fi:nphardness-c}}
	\caption{(a) The fork gadget $F$ and its two possible non-transitive $st$-orientations. (b) The variable gadget $D_O(G_x)$ associated to $x\in X$, where $O$ is one of its two possible orientations. (c) The split gadget $D_O(S_k)$ associated to $x$, where $O$ is one of its two possible orientations.}\label{fi:nphardness}
\end{figure}

Given an instance $\langle X,\varphi\rangle $ of \nae, the instance  $\langle G_\varphi, s,t \rangle$ of \ntsto is constructed as follow. For each $x\in X$ we add $G_{x}$ and two split gadgets $S_{k}$ and $S_{\overline{k}}$, where $k$ (resp. $\overline{k}$) is the number of clauses where $x$ appears in its non-negated (resp. negated) form (edges $x$ and $\overline{x}$ are the input edges of $S_k$ and $S_{\overline{k}}$, respectively). Finally, for each clause $c=(x_1,x_2,x_3) \in \varphi$, we add a vertex $c$ that is incident to an output edge of the split gadget of each of its variables. See \cref{fi:nphardness-club-b}, where the non-dashed edges and all the vertices with the exception of $g$ define $G_{\varphi}$. It can be shown that $\langle X,\varphi\rangle $ is a yes-instance of \nae if and only if $\langle G_\varphi, s,t \rangle$ is a yes-instance of \ntsto~\cite{DBLP:conf/gd/BinucciDP22}. 


\subsection{Hardness for Graphs of Bounded Diameter}

Given an undirected graph $G$, the \emph{distance} between two vertices of $G$ is the length of any shortest  path connecting them. The \emph{diameter} of $G$ is the maximum distance over all pairs of vertices of the graph. 
We now adapt the construction in \cref{sse:glimpse} to show that \ntsto remains \NP-hard also for graphs of bounded diameter. We define the \emph{extended fork gadget}  by adding an edge $e_{11}$ to the fork gadget (see \cref{fi:nphardness-club-a}).

\begin{figure}[t]
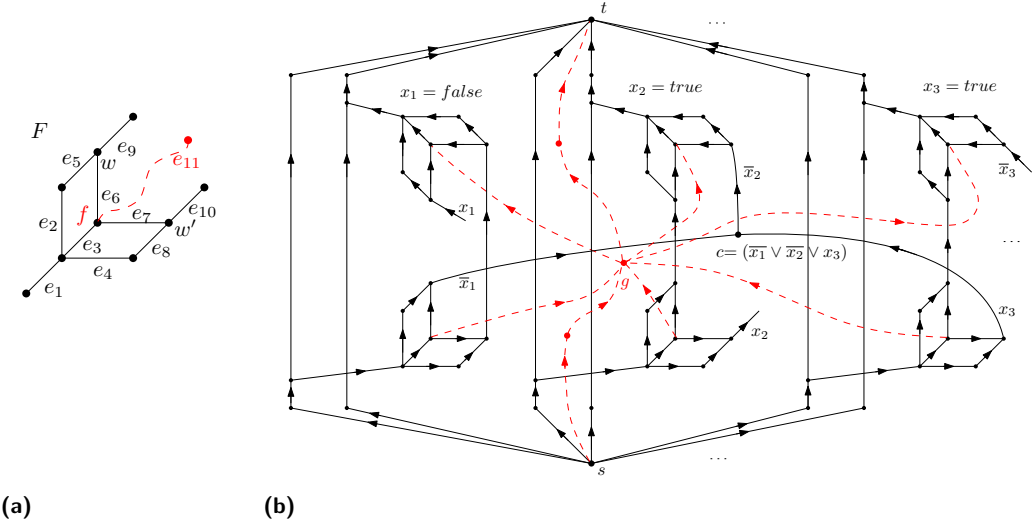

	\centering
	\subfloat[]{\includegraphics[width=0.24\columnwidth,page=4]{st-orientation-NPhardness}\label{fi:nphardness-club-a}}
	\hfil
	\subfloat[]{\includegraphics[width=0.74\columnwidth,page=5]{st-orientation-NPhardness}\label{fi:nphardness-club-b}}
	\caption{(a) A fork gadget $F$ extended with edge $e_{11}$. (b) Graphs $D_{O'}(G_{\varphi})$, defined by the non-dashed edges, and graph $D_O(H_{\varphi})$, obtained from $G$ by adding $g$ and the dashed edges. $O'$ and $O$ are non-transitive $st$-orientations of $G_\varphi$ and $H_\varphi$, respectively.  $O$ is obtained by extending~$O'$.
 }\label{fi:nphardness-club}
\end{figure} 

\smallskip \noindent \textbf{Construction of $H_\varphi$.} Given an instance $\langle X,\varphi\rangle $ of \nae and the instance  $\langle G_\varphi, s,t \rangle$ of \ntsto computed as described in \cref{sse:glimpse}, we define $\langle H_\varphi, s,t \rangle$ as follows. We first set $H_\varphi=G_\varphi$. Then, we add a vertex $g$ to $H_\varphi$ and an edge $(g,f)$ for each vertex $f$ belonging to a fork $F$ of $H_\varphi$ and incident to the corresponding edges $e_3$, $e_6$, and $e_7$. Also, we add edges  $(g,t)$ and $(s,g)$, and we subdivide each of them once. See \cref{fi:nphardness-club-b} (the non-dashed edges and all the vertices with the exception of $g$ define~$G_{\varphi}$).

We prove the following technical lemma, in which we use the same notation depicted in \cref{fi:nphardness-club-a}.

\begin{lemma}\label{le:extendedfork}
Let $G$ be an undirected graph containing an extended fork gadget $F$ that does not contain $s$ and $t$. In any non-transitive $st$-orientation $O$ of $G$, either $e_3$ enters $f$ and $e_6,e_7$ exit $f$ or vice versa.
\end{lemma}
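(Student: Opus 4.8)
The plan is to bootstrap from the analysis of the plain fork gadget already established in Lemma~1 of~\cite{DBLP:conf/gd/BinucciDP22}, and then to show that the single additional edge $e_{11}$ cannot enlarge the set of admissible orientations of the three edges meeting at $f$. First I would note that $G$ contains the plain fork $F$ as a subgraph: the new edge $e_{11}$ is incident to the \emph{internal} vertex $f$ and leaves untouched the degree-$1$ end-vertices of the port edges $e_1,e_9,e_{10}$, so the interface of $F$ is intact and $F$ still contains neither $s$ nor $t$. Hence Lemma~1 of~\cite{DBLP:conf/gd/BinucciDP22} applies directly and gives the port dichotomy: in any non-transitive $st$-orientation $O$ of $G$, either $e_1$ enters $F$ and $e_9,e_{10}$ exit $F$, or vice versa. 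Since reversing all edge directions maps non-transitive $st$-orientations to non-transitive $st$-orientations, the two cases are symmetric, and it suffices to fix the first one.

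Next I would restrict $O$ to the edges of $F$. Because any directed path inside $F$ is also a directed path of $G$, this restriction is itself a non-transitive orientation of $F$, and every internal vertex of $F$ \emph{other than} $f$ is neither a source nor a sink (all of its edges lie in $F$). If $f$ too were forced to be non-source/non-sink, the restriction would coincide with one of the two orientations $O_1,O_2$ admitted by $F$ (depicted in \cref{fi:nphardness-a}), which exhibit exactly the two claimed patterns at $f$; in particular the four \emph{wrong} mixed orientations of $e_3,e_6,e_7$ are already excluded there by acyclicity and non-transitivity, and that exclusion is internal to $F$ and hence independent of how $e_{11}$ is oriented. The only genuinely new possibility created by $e_{11}$ is that $f$ becomes a source or a sink \emph{within} $F$ (that is, $e_3,e_6,e_7$ are all outgoing, or all incoming, at $f$), while $f$ is still a proper internal vertex of $G$ because $e_{11}$ supplies the missing incoming resp.\ outgoing edge.

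The crux of the proof, and its main obstacle, is therefore to rule out these two all-out / all-in configurations using only the structure of $F$ and the already-fixed port orientation. I would do this by a finite case check, exactly in the spirit of the proof of Lemma~1: orienting $e_3,e_6,e_7$ all outward (resp.\ all inward) at $f$, together with the forced directions of the remaining internal edges of $F$, closes a directed cycle or turns one of the short internal edges of \cref{fi:nphardness-club-a} into a transitive edge. Since $e_{11}$ is pendant at $f$, it lies on no internal cycle and on no internal directed path of $F$, so these cycle/transitivity certificates involve only edges of $F$ and remain valid regardless of the orientation of $e_{11}$; thus $e_{11}$ can serve only to keep $f$ from being a source or a sink of $G$ and cannot rescue any forbidden configuration. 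This leaves precisely the two stated patterns, which completes the proof.
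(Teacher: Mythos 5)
Your high-level architecture is salvageable, but as written the proof has two genuine gaps. First, you cannot apply Lemma~1 of~\cite{DBLP:conf/gd/BinucciDP22} ``directly'': that gadget lemma is proved under the implicit hypothesis that the internal vertices of $F$ have no incident edges outside $F$, and the extended fork violates this exactly at $f$. Keeping the ``interface'' $e_1,e_9,e_{10}$ intact is not the relevant condition; what matters is that $f$ acquires a fourth edge, which---as you yourself observe---makes configurations at $f$ locally possible (all edges of $F$ at $f$ incoming, or all outgoing) that are impossible in the plain gadget. Since the proof of Lemma~1 derives the port dichotomy precisely from an analysis of what can happen at the internal vertices, its conclusion cannot simply be quoted once that analysis is invalidated; it would have to be re-proved. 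This is exactly why the paper states and proves a separate lemma for the extended gadget. (This step is, fortunately, dispensable: the reversal symmetry you invoke is available without the port dichotomy.)

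Second, and more importantly, the decisive step is asserted rather than proved. Both your Case~A (``the four wrong mixed orientations are already excluded there'') and your Case~B (``a finite case check \ldots closes a directed cycle or turns one of the short internal edges into a transitive edge'') defer to certificates you never exhibit; in Case~A you additionally rely on a characterization (the restriction to $F$ must be $O_1$ or $O_2$) that is stronger than the quoted statement of Lemma~1, and whose validity in the presence of $e_{11}$ is again precisely what needs proof. That check is not a routine afterthought---it is the entire content of the lemma, and the paper's proof consists exactly of it: the middle vertex of the path $e_2,e_5$ has degree two and is neither $s$ nor $t$, so this path is directed one way or the other; hence if $e_3$ and $e_6$ agree at $f$ (both enter or both exit), one of $e_3,e_6$ is transitive, and symmetrically for $e_3$ and $e_7$ via the other internal length-two path. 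This uniform pairwise argument needs no case split on whether $f$ is a source or sink of the restriction, never invokes Lemma~1, and covers all six forbidden configurations at once, including your all-in/all-out cases. Once you concede that the internal certificates must be produced anyway, the bootstrap from the plain fork buys you nothing; what remains to be written down is exactly the paper's direct argument.
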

\begin{proof}
Suppose, by contradiction, that $e_3$ and $e_6$ enter (resp. exit) $f$. Since $F$ does not contain $s$ and $t$, either the directed path $\langle e_2,e_5 \rangle$ exits, or $\langle e_5,e_2 \rangle$ exists. In the former case,  $e_3$ (resp. $e_6$) is a transitive edge. In the latter case,  $e_6$ (resp. $e_3$) is a transitive edge. In both cases we contradict the fact that the orientation is non-transitive. Hence, either $e_3$ enters $f$ and $e_6$ exits $f$ or vice versa. By using a symmetric argument, the same property can be proved with respect for $e_3$ and $e_7$.
\end{proof}

\begin{theorem}\label{th:degsix}
\ntsto  is \NP-hard for graphs of diameter at most $6$.
\end{theorem}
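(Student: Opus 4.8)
The plan is to reduce from \nae, reusing the reduction of~\cite{DBLP:conf/gd/BinucciDP22} but replacing the instance $\langle G_\varphi,s,t\rangle$ with the instance $\langle H_\varphi,s,t\rangle$ constructed above. Since the equivalence ``$\langle X,\varphi\rangle$ is a yes-instance of \nae if and only if $\langle G_\varphi,s,t\rangle$ is a yes-instance of \ntsto'' is already established in~\cite{DBLP:conf/gd/BinucciDP22}, two things remain to be proved: that $H_\varphi$ has diameter at most six, and that $\langle H_\varphi,s,t\rangle$ is a yes-instance of \ntsto if and only if $\langle G_\varphi,s,t\rangle$ is. The entire purpose of attaching the single vertex $g$ to the centre $f$ (the vertex incident to $e_3,e_6,e_7$) of every fork, and of linking $g$ to $s$ and $t$ through one subdivision vertex each, is to shrink all pairwise distances while, by \cref{le:extendedfork}, leaving the forcing behaviour of each fork intact.

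For the diameter bound I would show that every vertex lies within distance three of the hub vertex $g$; then any two vertices are connected by a walk of length at most six passing through $g$. Indeed, $g$ is adjacent to the centre of every fork and, through a single subdivision vertex, to each of $s$ and $t$. A direct inspection of the constant-size fork, variable, and split gadgets then shows that every remaining vertex---including each clause vertex, which is attached to an output edge of a split gadget---is within distance two of some fork centre or of $s$ or $t$. Combining these facts yields distance at most three to $g$ for every vertex, hence diameter at most six.

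The equivalence splits into two directions, of which one is routine. Suppose $O$ is a non-transitive $st$-orientation of $H_\varphi$, and let $O'$ be its restriction to $G_\varphi$, obtained by deleting $g$, the two subdivision vertices, and their incident edges. Deleting edges can create neither a transitive edge nor a directed cycle, so $O'$ is acyclic and non-transitive; moreover \cref{le:extendedfork} guarantees that each fork centre $f$ retains both an incoming and an outgoing edge among $e_3,e_6,e_7$, so removing $e_{11}=(g,f)$ turns no vertex into a new source or sink, and $s,t$ remain the unique source and sink. Thus $O'$ certifies that $\langle G_\varphi,s,t\rangle$ is a yes-instance, whence $\langle X,\varphi\rangle$ is satisfiable by~\cite{DBLP:conf/gd/BinucciDP22}. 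This is precisely why \cref{le:extendedfork} is phrased in terms of $e_3,e_6,e_7$ only and says nothing about $e_{11}$.

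For the converse I would start from a non-transitive $st$-orientation $O'$ of $G_\varphi$ (which exists by~\cite{DBLP:conf/gd/BinucciDP22}) and extend it to $H_\varphi$ by orienting the two new paths as $s\to\cdot\to g$ and $g\to\cdot\to t$, and by orienting each edge $(g,f)$ according to the position of $g$ relative to the $st$-numbering induced by $O'$. This placement makes $g$ neither a source nor a sink and, being consistent with a linear order, creates no directed cycle; the single-source, single-sink, and acyclicity conditions then follow directly. The main obstacle---indeed the only genuinely delicate part of the proof---is to certify that no previously non-transitive edge becomes transitive: because $g$ is a high-degree hub, a new directed path could in principle be routed through $g$ between the two endpoints of some existing edge. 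I would control this by exploiting that $g$ is attached only to fork centres (and, through the subdivisions, to $s$ and $t$), so that every new path through $g$ enters it from a fork centre or from $s$ and leaves it towards a fork centre or towards $t$; using the orientation of the centres granted by \cref{le:extendedfork} together with the reachability structure of $G_\varphi$, I would verify that such a path can never shortcut the endpoints of an existing edge. Establishing this non-transitivity invariant is where the bulk of the work lies.
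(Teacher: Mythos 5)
Your construction of $H_\varphi$, the diameter argument (every vertex within distance three of the hub $g$), and the direction from $H_\varphi$ back to $G_\varphi$ (restriction, heredity of acyclicity and non-transitivity, and \cref{le:extendedfork} to show fork centres do not become sources or sinks) all coincide with the paper's proof. The problem is the forward direction: you yourself identify that the only genuinely delicate step is verifying that no edge of $G_\varphi$ becomes transitive via a directed path through $g$, and then you leave exactly that step undone (``I would verify that such a path can never shortcut the endpoints of an existing edge\dots this is where the bulk of the work lies''). Since everything else in the theorem is routine given the construction, this deferred verification \emph{is} the proof; a proposal that stops there is incomplete.

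The gap is not merely one of exposition, because your orientation rule for the edges $(g,f)$ differs from the paper's and is underspecified in a way that matters. You orient $(g,f)$ ``according to the position of $g$ relative to the $st$-numbering induced by $O'$,'' but you never say where $g$ is placed, and consistency with a linear order buys you only acyclicity, not non-transitivity: any placement with $f_1 < g < f_2$ creates the directed path $f_1 \to g \to f_2$, and whether this shortcuts some edge $ab$ (via $a \leadsto f_1$ and $f_2 \leadsto b$) is a question about the reachability structure of $D_{O'}(G_\varphi)$, which the linear order does not control. The paper instead ties the orientation of each edge $e_{11}=(g,f)$ to the internal state of its fork --- $e_{11}$ enters $g$ if and only if the fork's edge $e_1$ enters the fork --- and this correlation is what makes the key claim provable: every directed $s \leadsto t$ path through $g$ uses, near $g$, only edges incident to degree-2 subdivision vertices or edges $e_1$, $e_3$, $e_{11}$ of extended fork gadgets, and the endpoints of all such edges are pairwise non-adjacent by construction, so no edge $ab$ of $H_\varphi$ can acquire a path $a \leadsto b$ through $g$. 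To complete your proof you would need either to adopt this fork-correlated rule and reproduce that path-through-$g$ analysis, or to prove that some specific placement of $g$ in the topological order admits no shortcut; neither is done in the proposal.
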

\begin{proof}
We construct $H_\varphi$ as described above. Observe that any vertex of $G$ is at distance at most 3 to $g$, hence $H_\varphi$ has diameter at most $6$. We show that a non-transitive $st$-orientation of $G_\varphi$ corresponds to a non-transitive $st$-orientation of $H_\varphi$ ($\rightarrow$) and vice versa ($\leftarrow$). 

\smallskip \noindent
($\rightarrow$)  Given a non-transitive $st$-orientation $O'$ of $G_\varphi$, we construct an $st$-orientation $O$ of $H_\varphi$ by extending $O'$ as follow. We orient the four edges of $H_\varphi\setminus G_\varphi$ connecting $s$ to $t$ so that such path is directed from $s$ to $g$. For each other edge $e$, which is incident to $g$, we orient it so that $e$ enters $g$ if and only if $e$ is the edge incident to an extended fork gadget whose corresponding edge $e_1$ is an entering edge. See \cref{fi:nphardness-club-b}. For each two vertices $a,b\in D_O(H_\varphi)$, there is no path $a \leadsto b$ so that $ag,gb\in D_O(H_\varphi)$. Hence, since $D_{O'}(G_\varphi)$ has no cycle, also $D_O(H_\varphi)$ has no cycle. Consequently, $O$ is an acyclic orientation with $s$ and $t$ being its single source and sink, respectively. We now show that it does not contain transitive edges. Let $e=ab$ be any edge of $D_O(H_\varphi)$. We have that any  path from $s \leadsto t$ containing $g$ either contains edges incident to degre-2 vertices or edges $e_1$, $e_3$, and $e_{11}$ of an extended fork gadget. All these edges have endpoints which are not adjacent by construction. Hence, there is no path $a \leadsto b$ containing $g$ and, since $O'$ is non-transitive, $e$ is not transitive in $D_{O}(H_\varphi)$. 

\smallskip\noindent($\leftarrow$) It remains to prove the second direction of the proof ($\leftarrow$). Namely, let  $O$ be  a non-transitive $st$-orientation of $H_\varphi$. Let $O'\subset O$ be the restriction to the edges of $G_\varphi$. Since the absence of cycles and of transitive edges are hereditary properties, $D_{O'}(G_\varphi)$ has no cycles and no transitive edges. We have to show that $s$ and $t$ are the only source and sink of $D_{O'}(G_\varphi)$, respectively. Let $v\in G_{\varphi}\setminus \{s,t\}$. We have that if $v$ does not correspond to the vertex denoted $f$ of a fork gadget, then  its neighbourhood in $G_\varphi$ and $H_\varphi$ coincide and, since $D_{O}(H_\varphi)$ is an $st$-graph, then $v$ is  a source or a sink in neither $H_\varphi$ nor $G_\varphi$. Otherwise, by \cref{le:extendedfork} we have that $v$ is incident to at least an edge $e\in G_\varphi$ that enters $v$ and to at least an edge $e'\in G_\varphi$ that exits $v$. Then $v$ is again neither a source nor a sink of $G_\varphi$.
\end{proof}

\subsection{Hardness Subdivisions of Triconnected Graphs with Bounded Degree}
We prove now that \ntsto is \NP-hard even if $G$ is a \emph{4-graph}, i.e., the degree of each vertex is at most 4, and, in addition, it is a subdivision of a triconnected graph.

\smallskip \noindent \textbf{Construction of $J_\varphi$.} Given an instance $\langle X,\varphi\rangle $ of \nae and the instance  $\langle G_\varphi, s,t \rangle$ of \ntsto computed as described in \cref{sse:glimpse}, we compute $\langle J_\varphi, s,t \rangle$ as follows.  
We remove $s$ and $t$ from $J''_\varphi=G_\varphi$. We obtain  a disconnected graph whose  connected components are  $J_{\varphi,1},...,J_{\varphi,h}$. We add a vertex $s_i$ and a vertex $t_i$ to each $J_{\varphi,i}$ (which will play the role of local sources and sinks for each component). Next, for each $i\in [1,h-1]$: (i) We add the edge $(s_i,s_{i+1})$ and $(t_{i+1},t_i)$; 
(ii) We add an edge $e_{{i+1},i}$ incident to a vertex identified as the $f$-vertex of a fork gadget of $J_{\varphi,{i+1}}$ and to a vertex identified as the $f$-vertex of a fork gadget of $J_{\varphi,j}$. 
We denote by $J'_\varphi$ the obtained graph; see \cref{fi:nphardness-deg-a} for a schematic illustration. 
\begin{figure}[t]
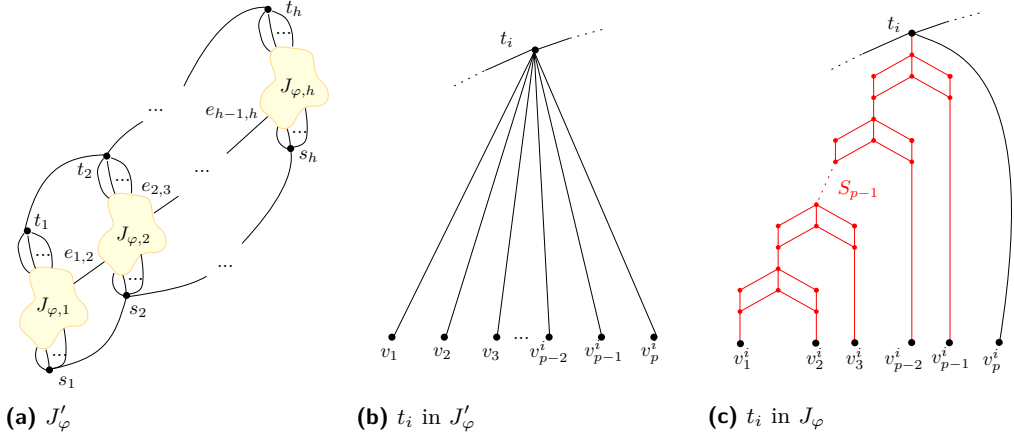

    \subfloat[$J'_\varphi$]{\includegraphics[width=0.32\columnwidth,page=8]{st-orientation-NPhardness}\label{fi:nphardness-deg-a}}
	\hfil
	\centering
	\subfloat[$t_i$ in $J'_{\varphi}$]{\includegraphics[width=0.32\columnwidth,page=6]{st-orientation-NPhardness}\label{fi:nphardness-deg-b}}
	\hfil
	\subfloat[$t_i$ in $J_{\varphi}$]{\includegraphics[width=0.32\columnwidth,page=7]{st-orientation-NPhardness}\label{fi:nphardness-deg-c}}
	\caption{(a) Schematic representation of graph $J'_\varphi$. (b-c) How vertex $t_i$ is connected to its neighbourhood in (b)~$J'_\varphi$ and (c)~$J_\varphi$.}\label{fi:nphardness-deg}
\end{figure} 
For each $J_{\varphi,i}$ ($i\in [1,h]$) of  $J'_\varphi$, the only vertices having degree higher than 4 are $s_i$ and $t_i$. For each $i\in [1,h]$, we proceed as follow. We first consider $t_i$. If $t_i$ has degree $p\le 4$, we do nothing. Otherwise, if $p\ge 5$, we proceed as follows: (i) We consider $p-1$ edges incident to $t_i$ and to vertices of $J_{\varphi,i}$ and we remove them; (ii) We connect the endpoints $v_1^i,...,v_p^i$ of the removed edges that are not $t_i$ to a split gadget $S_{p-1}$ and $t_i$ to the input edge of $S_{p-1}$.
\cref{fi:nphardness-deg-b} depicts $t_i$ ($i\in [1,h]$) and its neighborhood $\{v_1^i,...,v_p^i\}$ in $J'_{\varphi}$ and \cref{fi:nphardness-deg-c} depicts how $t_i$ is connected to the vertices $v_1^i,...,v_p^i$ after the above operation.  We perform a symmetric operation on~$s_i$. The resulting graph is denoted by $J_{\varphi}$, and it has vertex degree at most four by construction.

In order to prove \cref{th:degtri}, we first prove the next lemma.

\begin{lemma}\label{le:Jphi3conn}
Graph $J_\phi$ is a subdivision of a triconnected graph.
\end{lemma}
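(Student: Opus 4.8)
The plan is to prove the lemma by exhibiting the triconnected graph of which $J_\varphi$ is a subdivision. Let $\hat{J}_\varphi$ be the graph obtained from $J_\varphi$ by repeatedly suppressing degree-$2$ vertices (replacing such a vertex and its two incident edges by a single edge between its neighbours). By construction $J_\varphi$ is a subdivision of $\hat{J}_\varphi$, so it suffices to show that $\hat{J}_\varphi$ is a simple triconnected graph. Recall that a graph on at least four vertices is triconnected if and only if it has no separation pair, which by Menger's theorem is equivalent to requiring that every pair of its vertices be joined by three internally vertex-disjoint paths. Hence the whole argument reduces to verifying that $\hat{J}_\varphi$ has at least four vertices, is simple, and contains no separation pair.

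I would first settle biconnectivity (no cut vertex), which is a prerequisite and the easier half; it follows by inspecting how the gadgets (fork, variable, and split, see \cref{fi:nphardness}) attach to one another through the $s$-path, the $t$-path, and the chaining edges, which ensures that every vertex lies on a cycle and that no single vertex separates the graph. For the absence of separation pairs I would exploit the modular structure of $J_\varphi$, which is a linear chain of the components $J_{\varphi,1},\dots,J_{\varphi,h}$: two consecutive components $J_{\varphi,j}$ and $J_{\varphi,j+1}$ are joined by the three vertex-disjoint connections $(s_j,s_{j+1})$, $(t_{j+1},t_j)$, and the fork-to-fork edge $e_{j+1,j}$, which are incident to three distinct attachment vertices on each side. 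The argument then splits into two layers: first, that each component together with its local source and sink $s_i,t_i$ and the attached degree-reducing split gadgets is a subdivision of a triconnected graph (using the internal triconnectivity of the constant-size gadgets and the fact that each $S_{p-1}$ is itself such a subdivision attached to $p$ distinct vertices); and second, that gluing triconnected blocks along a path using three vertex-disjoint edges per junction again yields a triconnected graph, since separating one side of a junction from the other would require deleting all three junction vertices. Combining the three disjoint routes provided by the chain with the three internally disjoint paths guaranteed inside each block then furnishes, for any pair $\{u,v\}$, three internally disjoint paths between $u$ and $v$.

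The step I expect to be the main obstacle is the local analysis around the artificially introduced vertices $s_i$ and $t_i$ and their degree-reducing split gadgets: one must check that rerouting the $p-1$ edges through $S_{p-1}$ does not create a separation pair, i.e., that three internally disjoint routes survive between $s_i$ (resp.\ $t_i$) and the remainder of its component, and that suppressing degree-$2$ vertices produces neither parallel edges nor a vertex of degree less than three (both of which would break triconnectivity or simplicity of $\hat{J}_\varphi$). This is a finite but somewhat tedious verification carried out over the constant-size gadgets and their interfaces, and it is where most of the work lies.
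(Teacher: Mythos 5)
Your overall strategy mirrors the paper's: both define $\hat{J}_\varphi$ by suppressing degree-2 vertices, both reduce the lemma to triconnectivity of each component $J_{\varphi,i}$ plus a gluing argument across the chain exploiting the three disjoint connections $(s_j,s_{j+1})$, $(t_{j+1},t_j)$, and $e_{j,j+1}$ between consecutive components. Your gluing step (deleting any two vertices leaves each block connected and kills at most two of the three junction edges, so the chain stays connected) is, if anything, cleaner than the paper's explicit construction of paths $\pi_1,\pi_2,\pi_3$ through the $s$-chain, the $t$-chain, and the leftover graph.

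The genuine gap sits exactly where you flag "the main obstacle": the per-component triconnectivity is never established, and the claim you lean on for it is false as stated. The fork and split gadgets are \emph{not} "internally triconnected" as standalone graphs, and $S_{p-1}$ is \emph{not} by itself a subdivision of a triconnected graph: a fork gadget has dangling attachment edges ($e_1$, $e_9$, $e_{10}$) with degree-1 endpoints, and a split gadget is a chain of forks linked by single edges, so in isolation these gadgets have cut vertices and bridges. The statement that actually needs proving --- and the one the paper proves --- is relational, not local: no two vertices of a fork gadget can form a separation pair \emph{of the whole component} $J_{\varphi,i}$; every variable-gadget vertex reaches $s_i$ and $t_i$ by paths avoiding clause vertices and the degree-3 vertices of variable gadgets, so neither vertex of a putative separation pair can be of those types; and $(s_i,t_i)$ itself is not a separation pair. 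That case analysis over where the two cut vertices could lie (including the check that rerouting the $p-1$ edges at $s_i$ and $t_i$ through $S_{p-1}$ creates no separation pair) is the substance of the lemma; your proposal names it, defers it, and supports it with an incorrect premise, so as written it is a plan rather than a proof.
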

\begin{proof}
We use the same notation as in \cref{fi:nphardness-deg-a}.
Let $\hat{J}_\varphi$ be the graph obtained from $J_\varphi$ by replacing any  two edges $xy$ and $yz$ such that $y$ is a degree-2 vertex with edge $xz$ and removing $y$ from the vertex set.  Let $u,v$ be two vertices of $\hat{J}_\varphi$. We show that there always exist three edge disjoint paths $\pi_1$, $\pi_2$, and $\pi_3$ connecting $u\in J_{\varphi,i}$ to $v\in J_{\varphi,j}$ for $i,j\in [1,h]$. Suppose $i=j$. We show that $J_{\varphi,i}$ ($i\in [1,h]$) is triconnected. Suppose, by contradiction, that $J_{\varphi,i}$ has a separation pair $(a,b)$. Observe that no two vertices of any fork gadget can form a separation pair. Also note that any vertex of a variable gadget is connected to  $s_i$ and $t_i$ by paths that do not include any clause-vertex  (i.e. a vertex representing a clause of the \naelong) or any degree 3-vertex of a variable gadget. It follows that neither $a$ or $b$ can be vertices of variable gadgets nor they can be clause-vertices. Finally, since the pair $(s_i,t_i)$ is not a separation pair, $a$ and $b$ cannot be the source and sink of $J_{\varphi,i}$. It follows that $J_{\varphi,i}$ is triconnected. Suppose $i<j$. We define the following disjoint paths: $\pi_1^1$ and $\pi_2^1$ connect $u$ to $s_i$ and $u$ to $t_i$, respectively; $\pi_1^2=\langle s_is_{i+1},...,s_{j-1}s_j \rangle$ and $\pi_2^2=\langle t_it_{i+1},...,t_{j-1},t_j \rangle$; $\pi_1^3$ and $\pi_2^3$ connect $v$ to $s_j$ and $t_j$, respectively. We set $\pi_1=\pi_1^1\cup \pi_1^2\cup \pi_1^3$ and $\pi_2=\pi_2^1\cup \pi_2^2\cup \pi_2^3$ (with a slight abuse of notation). Observe that $\hat{J}_\varphi\setminus \pi_1\cup \pi_2$ is connected, since, as observed above, each $J_{\varphi,q}$ is triconnected and since each $e_{q,q+1}\not \in \pi_1\cup \pi_2$ ($q\in [1,h]$). Hence, $\pi_3$ can be any path connecting $u$ to $v$ in $\hat{J}_\varphi\setminus \pi_1\cup \pi_2$.
\end{proof}

\begin{theorem}
\label{th:degtri}
\ntsto is \NP-hard for 4-graphs that are subdivisions of triconnected graphs.
\end{theorem}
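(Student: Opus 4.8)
The plan is to observe first that $J_\varphi$ satisfies the two structural requirements of the statement: it is a 4-graph by construction (the only vertices of degree at least five in $J'_\varphi$, namely the $s_i$ and the $t_i$, are replaced by split gadgets whose vertices all have bounded degree), and by \cref{le:Jphi3conn} it is a subdivision of a triconnected graph. Since $J_\varphi$ is clearly computable from $\langle X,\varphi\rangle$ in polynomial time, it remains only to prove that $\langle X,\varphi\rangle$ is a yes-instance of \nae if and only if $J_\varphi$ admits a non-transitive $st$-orientation, where we designate as $s$ and $t$ the two extreme vertices of the chains $s_1,\dots,s_h$ and $t_1,\dots,t_h$ of local sources and sinks. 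As in the proof of \cref{th:degsix}, I would establish this equivalence by relating non-transitive $st$-orientations of $J_\varphi$ to those of $G_\varphi$, exploiting the equivalence recalled in \cref{sse:glimpse} between the satisfiability of the \nae instance $\langle X,\varphi\rangle$ and the existence of a non-transitive $st$-orientation of $G_\varphi$ with source $s$ and sink $t$.

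For the direction from $G_\varphi$ to $J_\varphi$, I would start from a non-transitive $st$-orientation $O'$ of $G_\varphi$ and extend it to $J_\varphi$ as follows. After the deletion of $s$ and $t$, every component $J_{\varphi,i}$ inherits the restriction of $O'$; I orient all component edges of the local source $s_i$ outward and all component edges of the local sink $t_i$ inward, so that $s_i$ and $t_i$ reproduce locally the role of $s$ and $t$. I then orient the source chain away from one of its extreme vertices and the sink chain towards one of its extreme vertices, so that these two extremes become the unique global source and sink, respectively. Each inter-component edge $e_{i,j}$, being incident to $f$-vertices of forks, is oriented consistently with the incident fork orientations, exactly as the edges incident to $g$ were oriented in \cref{th:degsix}; each degree-reduction split gadget $S_{p-1}$ is oriented according to the admissible orientation (Lemma~3 of~\cite{DBLP:conf/gd/BinucciDP22}) compatible with $s_i$ being a source (resp.\ $t_i$ being a sink). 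It then suffices to check, as in \cref{th:degsix}, that no directed path routes ``through'' a newly added vertex so as to close a cycle or render an edge transitive; because the new edges are incident either to the chain vertices or to pairwise non-adjacent $f$-vertices, this verification is routine.

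For the converse, given a non-transitive $st$-orientation $O$ of $J_\varphi$, I would use the split-gadget property to force the intended local behavior: the orientation of the input edge of each split gadget determines the orientation of all its output edges, so in each component either all the former neighbors of $t$ feed into $t_i$ through the gadget or none do, and the uniqueness of the global source and sink, together with the direction imposed by the chains, rules out the degenerate possibilities. Once it is established that each $s_i$ is a local source and each $t_i$ a local sink, I would delete the chains, the edges $e_{i,j}$, and the degree-reduction split gadgets, and then identify all the $s_i$ into a single vertex $s$ and all the $t_i$ into a single vertex $t$, recovering $G_\varphi$. The restriction $O'$ of $O$ inherits acyclicity and non-transitivity, both of which are hereditary, and because the identified vertices are sources (resp.\ sinks) of their components, the identification creates neither a new directed cycle nor a transitive edge; hence $O'$ is a non-transitive $st$-orientation of $G_\varphi$ with source $s$ and sink $t$, and the equivalence for $G_\varphi$ concludes the argument.

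I expect the converse direction to be the main obstacle. The delicate points are to certify that the inter-component edges $e_{i,j}$ and the chain edges can never create a transitive edge or a directed cycle --- which requires checking, in the spirit of \cref{le:extendedfork}, that the $f$-vertices joined by the $e_{i,j}$ are pairwise non-adjacent and always carry both an incoming and an outgoing incident edge, so that no new two-edge bypass path is introduced --- and to verify that the split gadgets substituted at the high-degree local sources and sinks faithfully simulate the original source/sink behavior without introducing transitivity inside the gadget. Finally, one must argue precisely that merging the local sources and the local sinks never identifies two vertices joined by a directed path within a single component (which would create a cycle upon identification), a fact that ultimately rests on the uniqueness of the global source and sink enforced by the two chains.
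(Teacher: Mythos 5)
Your forward direction ($G_\varphi \Rightarrow J_\varphi$) matches the paper's proof essentially step by step: orient the split-gadget input edges so that $s_i$ stays a source and $t_i$ a sink, orient the two chains monotonically so that the chain extremes become the unique global source and sink, orient the $e_{i,j}$ edges across components, and verify acyclicity and non-transitivity. (One small deviation: the paper orients every $e_{i,i+1}$ uniformly from $J_{\varphi,i}$ to $J_{\varphi,i+1}$, independently of the fork orientations, and this uniform choice is what makes its acyclicity argument immediate; your ``consistently with the incident fork orientations'' rule involves two forks that need not agree.)

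The converse is where your proposal genuinely breaks, and in two places. First, the surgery you describe does not ``recover $G_\varphi$'': if you \emph{delete} the degree-reduction split gadgets and then identify the $s_i$'s and $t_i$'s, the former neighbors of $s$ and $t$ (the vertices $v_1^i,\dots$ whose edges to $t_i$ were rerouted through a gadget) end up with no edge to $t$ at all, so the resulting graph is missing edges of $G_\varphi$. The natural repair is to \emph{contract} each gadget path $v_q^i \leadsto t_i$ into the single edge $v_q^i t$, but then your heredity argument no longer applies to these edges: contracting a directed path of length at least two into one edge can create a transitive edge that did not exist before (two internally disjoint directed paths from $v_q^i$ to $t_i$ violate no non-transitivity constraint in $J_\varphi$, yet after contraction one of them becomes an edge and the other a witnessing path). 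Second, your argument rests on the claim that in \emph{every} non-transitive $st$-orientation of $J_\varphi$ each $s_i$ is a local source and each $t_i$ a local sink; you assert this follows from the split-gadget property plus uniqueness of the global source and sink, but the split-gadget property only correlates the directions of a gadget's input and output edges, it does not select which of the two orientations occurs, and uniqueness of source/sink does not force it either: $t_i$ could have its gadget input edge exiting it and a chain edge entering it, making it neither a source nor a sink, without any immediate contradiction. Ruling this out requires descending into the internal structure of the variable and fork gadgets. The paper sidesteps both difficulties: it defines $O'$ by taking $O$ on the edges common to $G_\varphi$ and $J_\varphi$ and orienting \emph{all} edges incident to $s$ and $t$ by fiat (out of $s$, into $t$); uniqueness of source and sink is then argued vertex-locally (using \cref{le:extendedfork} for the $f$-vertices, which carry the extra edges $e_{i,j}$ in $J_\varphi$), and non-transitivity of the freshly oriented edges follows because each such edge has an endpoint of degree~$2$ in $G_\varphi$, so a second directed path witnessing transitivity would force that endpoint to be an extra source or sink.
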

\begin{proof}
Graph $J_\varphi$ is a $4$-graph by construction and it is triconnected by \cref{le:Jphi3conn}. 
We show that a non-transitive $st$-orientation of $G_\varphi$ can be turned into a non-transitive $st$-orientation of $J_\varphi$ ($\rightarrow$) and vice versa ($\leftarrow$).

\smallskip \noindent ($\rightarrow$)  Given a non-transitive $st$-orientation $O'$ of $G_\varphi$, we compute an orientation $O$ of $J_\varphi$ by extending $O'$ as follow. For any $i\in [1,h]$:
\begin{itemize}
\item We orient the input edge incident to $t_i$ and $s_i$ of the split gadget that we added in the construction of $J_\varphi$ so that it enters $t_i$ and exits $s_i$, respectively. The orientation of all the other edges is given by the properties of the split gadget.
\item We orient $t_it_{i+1}$ from $t_i$ to $t_{i+1}$ and  $s_is_{i+1}$ from $s_i$ to $s_{i+1}$. Also, we orient  $e_{i,i+1}$ from its endpoint in $J_{\varphi,i}$ to its endpoint in $J_{\varphi,i+1}$.
\end{itemize}
We have that $D_O(J_\varphi)$ has one source $s_1$ and one sink $t_h$. Also, observe that for each orientation of the input edge of the split gadget, the split gadget has no cycle. Since there is no edge directed from a vertex in $J_{\varphi,i}$ to a vertex in $J_{\varphi,j}$ for any $i,j\in [1,h]$ so that $i>j$, and since $D_{O'}(G_\varphi)$ had no cycle, we have that $O$ is an $st$-orientation. It remains to show that $D_O(J_\varphi)$ has no transitive edge. Let $uv$ be an edge of $D_O(J_\varphi)$, where $u\in J_{\varphi,i}$ to $v\in J_{\varphi,j}$ ($i,j\in [1,h]$). If $i=j$, observe that there is a path $u\leadsto v$ in $D_{O'}(G_\varphi)$ if and only if the same holds in $D_O(J_\varphi)$. Hence, only edges of the split gadgets can be transitive in $G$, which is not possible, and thus  $uv$ is not transitive. Otherwise, either $uv=t_it_{i+1}$ or $uv=s_is_{i+1}$ or $uv=e_{i,i+1}$. In the first two cases $uv$ is not transitive because there is no path connecting $t_i$ to $t_{i+1}$ or $s_i$ to $s_{i+1}$ different to edge $uv$. Suppose $uv=e_{i,i+1}$. If there exists a directed path  $u \leadsto v$, then it must pass through $s_is_{i+1}$ or $t_it_{i+1}$. The first case is impossible, because there is no directed path connecting $u$ to $s_i$. The second case is also impossible, because there is no directed path connecting $t_i$ to $v$. Hence, $uv$ is not transitive and $O$ is a non-transitive $st$-orientation of $J_\varphi$.

\smallskip \noindent ($\leftarrow$)
Given a non-transitive $st$-orientation $O$ of $J_\varphi$, we compute an orientation $O'$ of $G_\varphi$ as follows. We direct each edge in $G_\varphi$, which are all the edge with the exception of the ones incident to $s$ and $t$, as in $O$. We direct all the other edges entering $t$ or exiting $s$. By \cref{le:extendedfork} $D_{O'}(G_\varphi)$ has only one source $s$ and only one sink $t$. Let $G_{\varphi,i}=G_\varphi \cap J_{\varphi,i}$. Since $D_{O}(J_\varphi)$ is non-transitive, each edge $e\in G_{\varphi,i}$ ($i\in [1,h]$) is not transitive. Also, since every edge incident to $s$ or $t$ has an end-vertex of degree 2, these edges (which are the only ones not in $J_\varphi$) are not transitive. It follows that  $O'$ is a non-transitive $st$-orientation of $G_\varphi$.
\end{proof}

\section{Open Problems}\label{se:conclusions}
%
Several interesting open problems stem from our research. Among them:

\begin{itemize}
    \item Is there an \FPT-algorithm for the \sto problem parameterized by treewidth running in $2^{o(\omega^2)}\cdot \text{poly}(n)$ time?
    \item Does \sto parameterized by treedepth admit a polynomial kernel?
    \item We have shown that finding non-transitive $st$-orientations is \NP-hard for graphs of vertex degree at most four. On the other hand, the problem is trivial for graphs of vertex degree at most two. What is the complexity of the problem for vertex degree at most three? Similarly, one can observe that the problem is easy for graphs of diameter at most two, while it remains open the complexity for diameter in the range $[3,5]$. 
\end{itemize}

\bibliographystyle{plainurl}
\bibliography{bibliography}
\clearpage

\end{document}